\newtheorem{proposition}{Proposition}[section]
\newtheorem{examp}[proposition]{Example}
\newtheorem{fact}[proposition]{Fact}
\newtheorem{lemma}[proposition]{Lemma}
\newtheorem{theorem}[proposition]{Theorem}
\newenvironment{example}{\begin{examp}\rm}{\end{examp}}
\newenvironment{proof}{\par\noindent {\it Proof.} \rm}{}
\newcommand{\qed}{\hfill$\square$}
\newenvironment{proof2}[1]{\par\noindent {\it Proof of #1.} \rm}{ \qed} 
\newcommand{\bw}{\mathbf{w}}
\begin{document}

 \pagestyle{plain}

\title{A characterization of binary morphisms\\ generating Lyndon infinite words}
\author{Gwenaël Richomme, Patrice S\'e\'ebold\\
LIRMM, Université Paul-Valéry Montpellier 3,\\
Université de Montpellier, CNRS, Montpellier, France}
\date{\today}

\maketitle

\begin{abstract}
An infinite word is an infinite Lyndon word if
it is smaller, with respect to the lexicographic order, than all its proper
suffixes, or equivalently if it has infinitely many finite Lyndon words as prefixes.
A characterization of binary endomorphisms generating Lyndon infinite words is provided. 
\end{abstract}

\section{Introduction}

Finite Lyndon words are the non-empty
words which are smaller, w.r.t. (with respect to) the lexicographic order,
than all their proper suffixes. They are important tools in many studies (see, \textit{e.g.}, \cite{BerstelLauveReutenauerSaliola2008book,Lothaire1983book,Lothaire2002book,Reutenauer2019book}).
Infinite Lyndon words are defined similarly. They are also the words that have infinitely many finite Lyndon words as prefixes.
They occur in many context (see, \textit{e.g.}, 
\cite{AlloucheCurrieShallit1998,BorelLaubie1993,CharlierKamaePuzyninaZamboni2014_JCTA,GlenleveRichomme2008TCS,LeveRichomme2007TCS,Paquin2010DMTCS,PosticZamboni2020TCS}).

The aim of the current paper is to provide a characterization, in the binary case, 
of endomorphisms that generate infinite Lyndon words. 
This paper continues the study of links between morphisms and Lyndon words done by the first author. 
In \cite{Richomme2003BBMS} he studied and characterized
the morphisms that preserve Lyndon words, calling them \textit{Lyndon morphisms}: these morphisms are
those that map any Lyndon word to another Lyndon word.
This study was extended to morphisms that preserve infinite Lyndon words in \cite{Richomme2007DMTCS}.

Note that being a morphism that preserves finite Lyndon words is a sufficient condition 
to generate an infinite Lyndon word (if the morphism generates an infinite word).
Indeed if $f$ is a morphism that preserves finite Lyndon words and $u$ is a Lyndon word,
then, for any $n \geq 0$, $f^n(u)$ is a Lyndon word. Applying this process when $u = a$ with 
a morphism $f$ that generates from $a$ an infinite word $\bw$, 
we see that $\bw$ has infinitely many finite Lyndon words as prefixes: it is an infinite Lyndon word.
But the condition is not necessary. 
For instance, the morphism defined by $f(a) = aba$ and $f(b) = bb$ generates an infinite Lyndon word (the proof can be done using Proposition~\ref{propCas3}) but it does not preserve finite Lyndon words since $f(a)$ is not a Lyndon word.

Our main characterization is Theorem~\ref{th_main}:
Over $\{a, b\}$ with $a \prec b$, a non-periodic word generated by a morphism $f$ 
prolongable on $a$ is an infinite Lyndon word if and only if 
$f$ preserves the lexicographic order on finite words and $f^3(a)$ is a prefix of Lyndon words.
The proof needs to consider separately the case where $aa$ is a prefix of $f^\omega(a)$ 
and the case where $ab$ is a prefix of $f^\omega(a)$.
After some needed preliminaries in Section \ref{sectionPreliminaries}, we prove the following general necessary condition: 
a binary endomorphism that generates an infinite Lyndon word must preserve the lexicographic order on finite words.
In Section~\ref{sectionCase2}, we characterize morphisms that generate an infinite Lyndon word beginning with $aa$ (Proposition~\ref{propCas2}).
In Section~\ref{sectionCase3}, we characterize morphisms that generate an infinite Lyndon word beginning with $ab$ (Proposition~\ref{propCas3}).
In Section~\ref{Section_main_result}, we prove our mail result.
We conclude with a few words on what happens on larger alphabets.

\section{\label{sectionPreliminaries}About Lyndon words and morphisms}

We assume that readers are familiar with combinatorics on words and morphisms 
(see, \textit{e.g.}, \cite{Lothaire1983book,Lothaire2002book}). 
We specify our notation and recall useful results.

 An \textit{alphabet} $A$ is a set of symbols called \textit{letters}.
 Here we consider only finite alphabets.  A \textit{word over} $A$ is
 a sequence of letters from $A$.  The \textit{empty word}
 $\varepsilon$ is the empty sequence.  Equipped with the concatenation
 operation, the set $A^*$ of finite words over $A$ is a free monoid
 with neutral element $\varepsilon$ and set of generators $A$.  We let $A^\omega$ 
 denote the set of infinite words over $A$.  As usually,
 for a finite word $u$ and an integer $n$, the $n^{\rm th}$ power of
 $u$, denoted $u^n$, is the word $\varepsilon$ if $n = 0$ and the word
 $u^{n-1}u$ otherwise.  If $u$ is not the empty word, $u^\omega$
 denotes the infinite word obtained by infinitely repeating $u$. Such a word is called \textit{periodic}.  A
 finite word $w$ is said \textit{primitive} if for any word $u$, the
 equality $w = u^n$ (with $n$ an integer) implies $n = 1$. 

 Given a non-empty word $u = a_1\cdots a_n$ with $a_i \in A$, the
 \textit{length} $|u|$ of $u$ is the integer $n$. One has
 $|\varepsilon| = 0$.  If for some words $u, v, p, s$ (possibly
 empty), $u = pvs$, then $v$ is a \textit{factor} of $u$, $p$ is a
 \textit{prefix} of $u$ and $s$ is a \textit{suffix} of $u$.  When $p
 \neq u$ (resp. $s \neq u$), we say that $p$ is a \textit{proper
 prefix} (resp. $s$ is a \textit{proper suffix}) of $u$. 

Let us recall two basic results.

\begin{proposition}[{see, \textit{e.g.}, \cite[Prop. 1.3.2]{Lothaire1983book}}]
\label{P1.3.2.Lothaire}
For any words $u$ and $v$, $uv = vu$ if and only if there exist a word $w$ and integers $k$, $\ell$ such that $u = w^k$ and $v = w^\ell$.
\end{proposition}

\begin{theorem}[Fine and Wilf's Theorem, see, \textit{e.g.}, {\cite[Prop. 1.3.5]{Lothaire1983book}}]
\label{Th_Fine_Wilf}~\\
Let $x, y \in A^*$, $n = |x|$, $m = |y|$, $d = gcd(n, m)$.
Assume there exist integers $p$ and $q$ such that
$x^p$ and $y^q$ have a common prefix
of length at least equal to $n+m-d$.
Then $x$ and $y$ are powers of the same word.
\end{theorem}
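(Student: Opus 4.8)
The plan is to reduce the statement to the classical \emph{periodicity lemma}: any finite word of length at least $n+m-d$ that has both $n$ and $m$ as periods also admits $d=\gcd(n,m)$ as a period. First I would let $w$ be the common prefix of $x^p$ and $y^q$ of length exactly $n+m-d$, which exists by hypothesis; note that $n+m-d\ge n$ and $n+m-d\ge m$ because $d\le\min(n,m)$, so both $x$ and $y$ are prefixes of $w$. Writing $w[i]$ for the $i$-th letter of $w$, the fact that $w$ is a prefix of $x^p$ gives $w[i]=w[i+n]$ whenever both indices lie in $\{1,\dots,|w|\}$, i.e.\ $w$ has period $n$; symmetrically, being a prefix of $y^q$, it has period $m$. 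Thus $w$ meets the hypotheses of the periodicity lemma.

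To prove the periodicity lemma I would proceed by induction on $n+m$. If $n=m$ then $d=n$ and there is nothing to prove, so suppose $n>m$. The key (Euclidean) step is to show that $w$ also has period $n-m$. For an index $i$ with $i+n\le|w|$ this is easy: one combines $w[i]=w[i+n]$ with $w[i+n]=w[(i+n)-m]$ to obtain $w[i]=w[i+(n-m)]$; the indices $i$ with $i+n>|w|$ require a separate chaining argument discussed below. Granting period $n-m$, and noting $\gcd(n-m,m)=d$ together with $|w|\ge n+m-d\ge (n-m)+m-d$, the induction hypothesis applied to the pair of periods $(n-m,m)$ then yields that $w$ has period $d$.

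Once $w$ has period $d$, let $z$ be its prefix of length $d$. Since $d$ divides both $n$ and $m$ and $w$ is long enough to contain $x$ and $y$ as prefixes, periodicity forces $x=z^{n/d}$ and $y=z^{m/d}$, so $x$ and $y$ are powers of the common word $z$; equivalently, one checks $xy=yx$ and concludes through Proposition~\ref{P1.3.2.Lothaire}. The step I expect to be the main obstacle is precisely the boundary case deferred above: for the indices $i$ with $|w|-n<i\le|w|-(n-m)$ one must still derive $w[i]=w[i+(n-m)]$ by chaining the two periods while never leaving $\{1,\dots,|w|\}$, and it is exactly here that the sharp length bound $n+m-d$ is consumed and seen to be tight.
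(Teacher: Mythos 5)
You should note first that the paper contains no proof of this statement at all: Theorem~\ref{Th_Fine_Wilf} is quoted as a known result from Lothaire, so your attempt can only be judged on its own terms. On those terms it has a genuine gap, and you have located it yourself without closing it. Your Euclidean induction hinges on the claim that a word $w$ of length $n+m-d$ with periods $n$ and $m$ ($n>m$) also has period $n-m$. The forward chain $w[i]=w[i+n]=w[i+n-m]$ works only for $i\le |w|-n$, and the backward chain $w[i]=w[i-m]=w[i-m+n]$ works only for $i\ge m+1$; since $|w|=n+m-d<n+m$, these two ranges leave exactly the $d$ positions $i$ with $m-d<i\le m$ uncovered, and for those no single application of either period (nor any obvious two-step chain) stays inside $\{1,\dots,|w|\}$. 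Deferring this to ``a separate chaining argument'' is not a proof: the multi-step connectivity of positions under shifts by $\pm n$ and $\pm m$ within a window of length exactly $n+m-d$ \emph{is} the substance of Fine and Wilf's theorem, so your write-up establishes the easy reductions and omits precisely the hard kernel. (The completion is possible --- one standard fix is to first show that the \emph{prefix} of $w$ of length $|w|-m$ has period $n-m$, apply the induction hypothesis to that prefix, and then propagate the period $d$ back to all of $w$ --- but even that propagation step needs care at the boundary, which is why it must be written out rather than asserted.)

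You could also have avoided index bookkeeping entirely by arguing on words, in the style of the source the paper cites. Since $|w|\ge n+m-d\ge n>m$, the word $y$ is a prefix of $x$; write $x=yz$ with $|z|=n-m$. Deleting the leading $y$ from $w$ yields a common prefix of $z(yz)^{p-1}=(zy)^{p-1}z$ and $y^{q-1}$ of length at least $n-d=(n-m)+m-\gcd(n-m,m)$, since $\gcd(n-m,m)=d$. Induction on $n+m$ (with the base case $n=m$, where the hypothesis forces $x=y$) gives that $z$ and $y$ are powers of a common word, hence so is $x=yz$; equivalently one gets $zy=yz$ and concludes via Proposition~\ref{P1.3.2.Lothaire}. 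This route buys exactly what your sketch lacks: the troublesome boundary positions never appear, because the length bound is consumed once, cleanly, in verifying the inductive hypothesis for the shorter pair $(z,y)$.
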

 
 \subsection{Lyndon words}
 
 From now on we consider ordered alphabets.  We let $A_n = \{a_1 \prec
 \ldots \prec a_n\}$ denote the $n$-letter alphabet $A_n = \{a_1, \ldots, a_n\}$
 with order $a_1 \prec \ldots \prec a_n$. Given an ordered alphabet $A$, we
let also $\preceq$ denote the lexicographic order whenever used on $A^*$ or
 on $A^\omega$.  Let us recall that for two different (finite or
 infinite) words $u$ and $v$, $u \prec v$ if and only if 
 $u = x\alpha y$, $v
 = x\beta z$ with $\alpha, \beta \in A$, $\alpha  \prec \beta$, $x \in A^*$, $y, z \in A^*\cup
 A^\omega$, or if (when $u$ is finite) $u$ is a proper prefix of $v$. 
For any finite words $u$, $v$, $w$, if $u \prec v$, then $wu \prec wv$.
Moreover if $u$ is not a prefix of $v$ and $u \prec v $, then
$ux \prec vy$ for any words $x$ and $y$.

 A non-empty finite word $w$ is a \textit{Lyndon word} if for all
non-empty words $u$ and $v$, $w = uv$ implies $w \prec vu$.  
Equivalently \cite{ChenFoxLyndon1958,Lothaire1983book}, a non-empty word $w$ is
a Lyndon word if all its non-empty proper suffixes are greater than itself
for the lexicographic order. For instance, on the one-letter alphabet
$\{a\}$, only $a$ is a Lyndon word.  On $\{a \prec b\}$ the Lyndon words
of length 6 are 
$aaaaab$, 
$aaaabb$, $aaabab$, 
$aaabbb$, $aababb$, $aabbab$,
$aabbbb$, $ababbb$,
$abbbbb$.
Lyndon words are
primitive.  
Note that Lyndon words have no non-empty border, that is, 
there is no proper prefix of a Lyndon word $u$ that is also a suffix of $u$.
Observe also that if $u$ is a prefix of a Lyndon word
then there cannot exist words $v$ and $w$ 
such that the three following conditions hold:
$v$ is a prefix of $u$;
$w$ is a factor of $u$ which is not
a prefix of $u$;
$w \prec v$.

\begin{proposition}[{see, \textit{e.g.}, \cite[prop. 5.1.3]{Lothaire1983book}}]
\label{baseLyndon}
A non-empty word $w$ is a Lyndon word if and only if $|w| = 1$ or $w =
uv$ with $u$ and $v$ two Lyndon words such that $u \prec v$.
\end{proposition}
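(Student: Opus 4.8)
The plan is to prove the two implications separately. The backward direction is essentially a case check on suffixes, while the forward direction hinges on choosing the right factorization, and that choice is where the real work concentrates.

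For the ``if'' part, the case $|w| = 1$ is immediate, a single letter having no proper non-empty suffix and being vacuously Lyndon. So suppose $w = uv$ with $u, v$ Lyndon and $u \prec v$, and verify that every proper non-empty suffix $s$ of $w$ satisfies $w \prec s$. The key preliminary inequality is $uv \prec v$: if $u$ is not a prefix of $v$ it follows directly from the order property recalled before the statement (applied to $u \prec v$ with the two suffixes $v$ and $\varepsilon$); if $u$ is a prefix of $v$, I would write $v = ut$ with $t$ non-empty, use $v \prec t$ (since $v$ is Lyndon and $t$ is a proper suffix), and prepend $u$ to obtain $uv \prec ut = v$. Given $uv \prec v$, I split the suffixes $s$ by length. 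If $|s| \le |v|$ then $s$ is a suffix of $v$, and either $s = v$, giving $uv \prec v = s$, or $s$ is a proper suffix of $v$ with $v \prec s$, whence $uv \prec s$ by transitivity. If $|s| > |v|$ then $s = u'v$ for a proper non-empty suffix $u'$ of $u$; since $u$ is Lyndon we have $u \prec u'$, and because $|u'| < |u|$ the word $u$ is not a prefix of $u'$, so the order property yields $uv \prec u'v = s$. This exhausts all cases and shows $w$ is Lyndon.

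For the ``only if'' part, assume $w$ is Lyndon with $|w| \ge 2$, let $v$ be the lexicographically smallest proper non-empty suffix of $w$, and write $w = uv$ with $u$ non-empty. First $v$ is Lyndon, since any proper non-empty suffix of $v$ is a proper non-empty suffix of $w$ distinct from $v$, hence strictly greater than $v$ by minimality. Next $u \prec v$: as $w$ is Lyndon we have $uv = w \prec v$, and comparing $uv$ and $v$ the first mismatch must fall within the first $|v|$ positions (otherwise $v$ would be a prefix of $uv$, forcing $v \prec uv$); if that mismatch occurs at a position $\le |u|$ it is also the first mismatch of $u$ and $v$ and gives $u \prec v$, while otherwise $u$ is a prefix of $v$ and again $u \prec v$, the equality $u = v$ being excluded since it would make $w = vv$ imprimitive.

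The delicate point, and the step I expect to be the main obstacle, is that $u$ is Lyndon. Given a proper non-empty suffix $u'$ of $u$, I want $u \prec u'$. Since $w$ is Lyndon and $u'v$ is a proper suffix of $w$, we have $uv \prec u'v$. As $|u'| < |u|$, the word $u$ is not a prefix of $u'$, so either $u$ and $u'$ already differ within the first $|u'|$ letters, in which case that same mismatch gives $u \prec u'$ directly, or $u'$ is a prefix of $u$, say $u = u't$ with $t$ non-empty. This second case is precisely the obstruction, a border-like situation, and it is resolved by the minimality of $v$: cancelling the common prefix $u'$ in $uv \prec u'v$ yields $tv \prec v$, yet $t$ is a proper non-empty suffix of $u$, so $tv$ is a proper non-empty suffix of $w$ and minimality forces $v \prec tv$, a contradiction. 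Hence only the first case occurs, $u \prec u'$ holds for every such $u'$, and $u$ is Lyndon, which completes the proof.
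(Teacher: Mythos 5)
Your proof is correct in all its cases; note that the paper itself gives no proof of this statement (it is quoted as a classical result from Lothaire, Prop.\ 5.1.3), so there is no in-paper argument to compare against. Your argument --- the suffix-by-length case analysis for sufficiency, and for necessity the choice of $v$ as the lexicographically smallest proper non-empty suffix, with the border case $u = u't$ killed by cancelling $u'$ in $uv \prec u'v$ and invoking minimality of $v$ --- is essentially the standard textbook proof of this factorization result, and every step (including the delicate verification that $u$ is Lyndon) checks out.
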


Lyndon infinite words were introduced in \cite{SiromoneyMathewDareSubramanian1994} as the infinite words that have 
infinitely many prefixes that are Lyndon words. It follows from the definition that an infinite Lyndon word is not periodic.
More generally an infinite word is Lyndon if and only if all its proper suffixes
are greater than it w.r.t. the lexicographic order \cite[Prop. 2.2]{SiromoneyMathewDareSubramanian1994}. 

\subsection{Morphisms}

Let $A$ and $B$ be two alphabets. 
A \emph{morphism} $f$
from $A^*$ to
 $B^*$ is a mapping from $A^*$
 to $B^*$ such that for all words $u,
v$ over $A$, $f(uv) =
f(u)f(v)$. We say that $f$ is a morphism over $A$ if we don't need to refer to $B$.
When $A = B$, $f$ is an \emph{endomorphism} over $A$.
A morphism is \textit{erasing} if $f(a) = \varepsilon$ for some letter $a$.
For $n \geq 0$ and any word (finite or infinite) $u$, $f^n(u)$ is $u$ if $n = 0$ and $f^{n-1}(f(u))$ otherwise.

An endomorphism is said \textit{prolongable} on $a$ if $f(a) = au$ for some word $u$ and if $\lim_{n \to \infty} |f^n(a)|=\infty$.
For such a morphism, for all $n \geq 0$, $f^n(a)$ is a prefix of $f^{n+1}(a)$. Then the sequence $(f^n(a))_{n \geq 0}$ defines a unique infinite word, denoted $f^\omega(a)$. This word is a fixed point of $f$.

A morphism \textit{preserves finite Lyndon words} if and only if the image of any finite Lyndon word is also a Lyndon word.
Similarly morphisms that \textit{preserve infinite Lyndon words} can be defined.
A morphism \textit{preserves the order on finite words} if, for all words $u$ and $v$,
$u \prec v$ implies $f(u) \prec f(v)$. Such a morphism is injective and so non-erasing.
In \cite{Richomme2003BBMS}, it is proved that a morphism is a
Lyndon morphism if and only if it preserves the lexicographic order on
finite words and if the image of each letter is a Lyndon word. 
We have also the following characterization.
 
\begin{proposition}[{\cite[prop. 3.3]{Richomme2003BBMS}}]
\label{Richomme2003BBMS_prop3.3binaire}
A morphism $f$ over $\{a \prec b\}$ preserves the lexicographic order on finite words
if and only if $f(ab) \prec f(b)$.
\end{proposition}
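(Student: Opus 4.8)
The plan is to prove the two implications separately, with essentially all of the work in the ``if'' direction. The forward implication is immediate: since $a \prec b$, the words $ab$ and $b$ already differ in their first letter, so $ab \prec b$, and if $f$ preserves the lexicographic order on finite words this gives $f(ab) \prec f(b)$. For the converse I would write $\alpha = f(a)$ and $\beta = f(b)$, so that the hypothesis reads $\alpha\beta \prec \beta$, and first record that $f$ is non-erasing: if $\beta = \varepsilon$ the hypothesis would read $\alpha \prec \varepsilon$, which is impossible, and if $\alpha = \varepsilon$ it would read $\beta \prec \beta$; hence $\alpha \neq \varepsilon \neq \beta$.

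Next I would establish that $\alpha \prec \beta$. If $\alpha = \beta$ then $\beta$ is a proper prefix of $\alpha\beta = \beta\beta$, whence $\beta \prec \alpha\beta$, contradicting the hypothesis; likewise, if $\beta$ were a proper prefix of $\alpha$ it would be a proper prefix of $\alpha\beta$, giving again $\beta \prec \alpha\beta$. So $\beta$ is not a prefix of $\alpha$. Finally, if one had $\beta \prec \alpha$, then, $\beta$ not being a prefix of $\alpha$, the fact recalled above (if $u$ is not a prefix of $v$ and $u \prec v$, then $ux \prec vy$ for all $x,y$) would yield $\beta \prec \alpha\beta$, once more a contradiction. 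Since the lexicographic order is total, this forces $\alpha \prec \beta$.

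The heart of the argument is the claim that $\alpha f(y) \prec \beta$ for \emph{every} word $y$, which I would prove by induction on $|y|$. The base case $y = \varepsilon$ is exactly $\alpha \prec \beta$, proved above. For the inductive step write $y = cy'$ with $c \in \{a,b\}$. If $c = b$, then $\alpha f(y) = \alpha\beta f(y')$; since $\alpha\beta$ is strictly longer than $\beta$ it is not a prefix of $\beta$, so from $\alpha\beta \prec \beta$ the same non-prefix fact gives $\alpha\beta f(y') \prec \beta$. If $c = a$, then $\alpha f(y) = \alpha\,\alpha f(y')$; the induction hypothesis gives $\alpha f(y') \prec \beta$, and prepending $\alpha$ preserves the strict order, so $\alpha\alpha f(y') \prec \alpha\beta \prec \beta$, the last inequality being the hypothesis itself. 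I expect this claim to be the main obstacle: the naive hope that $\alpha \prec \beta$ transfers directly to $\alpha f(y) \prec \beta f(z)$ for arbitrary continuations \emph{fails} when $\alpha$ is a prefix of $\beta$, and what saves the argument is precisely the restricted shape of the images $f(y)$ (every nonempty image begins with the first letter of $\alpha$), which the induction exploits through the two elementary facts on prepending and on non-prefix comparisons.

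To conclude, since $\beta$ is a prefix of $\beta f(z)$ we have $\alpha f(y) \prec \beta \preceq \beta f(z)$, hence $\alpha f(y) \prec \beta f(z)$ for all words $y,z$. Now take any $u \prec v$. If $u$ is a proper prefix of $v$, then, $f$ being non-erasing, $f(u)$ is a proper prefix of $f(v)$, so $f(u) \prec f(v)$. Otherwise $u = xay$ and $v = xbz$ for a common prefix $x$, and then $f(u) = f(x)\alpha f(y)$ and $f(v) = f(x)\beta f(z)$ share the prefix $f(x)$; combining $\alpha f(y) \prec \beta f(z)$ with the prepending fact yields $f(u) \prec f(v)$. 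Thus $f$ preserves the lexicographic order on finite words, completing the equivalence.
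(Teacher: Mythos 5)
Your proof is correct, and note that the paper itself offers no proof to compare against: Proposition~\ref{Richomme2003BBMS_prop3.3binaire} is simply quoted from \cite[prop.~3.3]{Richomme2003BBMS}, so you have supplied a self-contained argument where the paper relies on a citation. Checking your argument in detail: the forward direction is indeed immediate from $ab \prec b$; your preliminary reductions (non-erasingness, $\beta$ not a prefix of $\alpha$, hence $\alpha \prec \beta$ by totality) are all sound; and the crux --- the induction showing $f(a)f(y) \prec f(b)$ for \emph{every} word $y$ --- is exactly the right strengthening of the hypothesis $f(ab) \prec f(b)$. You correctly diagnose why some such claim is unavoidable: $\alpha \prec \beta$ alone does not propagate to $\alpha f(y) \prec \beta f(z)$ when $\alpha$ is a proper prefix of $\beta$, and your two cases in the induction use only the two order facts recalled in the paper's preliminaries (prepending preserves $\prec$; a non-prefix strict comparison survives arbitrary extensions on both sides), with the length argument $|\alpha\beta| > |\beta|$ guaranteeing the non-prefix condition in the $c = b$ case. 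The final assembly (proper-prefix case via non-erasingness, divergence case via $u = xay$, $v = xbz$, which is the only possibility over a binary alphabet) is complete. One small bonus of your route worth noting: nothing in the argument uses that the target alphabet is binary, so your proof establishes the statement for any morphism from $\{a \prec b\}^*$ into an arbitrary ordered alphabet, which is the generality in which the cited result is actually used.
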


\section{\label{preliminaireEngendrement}A necessary condition}

In this section we prove the following result that states a necessary condition for a prolongable binary morphism to generate an infinite Lyndon word.

\begin{proposition}
\label{engendreEntrainePresMotsFinis}
Let $f$ be an endomorphism over $\{a \prec b\}$. 
Assume that $f$ is prolongable on $a$.
If $ f^\omega(a)$ is a Lyndon infinite word
then $f$ preserves the lexicographic order on finite words.
\end{proposition}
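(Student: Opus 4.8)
The plan is to use Proposition~\ref{Richomme2003BBMS_prop3.3binaire}: to prove that $f$ preserves the lexicographic order on finite words it suffices to show $f(ab)\prec f(b)$, that is, writing $A=f(a)$ and $B=f(b)$, to show $AB\prec B$. Throughout I set $\bw=f^\omega(a)$. First I would clear the degenerate situations. If $B=\varepsilon$, then applying $f$ to $A$ deletes every $b$ and replaces every $a$ by $A$, so $f(A)=A^{k}$ where $k$ is the number of occurrences of $a$ in $A$; prolongability forces $k\ge 2$, whence $\bw=A^\omega$ is periodic, contradicting that $\bw$ is Lyndon. If $A$ contains no $b$, then $A=a^p$ and $\bw=a^\omega$ is again periodic. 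Hence $B\neq\varepsilon$, $A=a^pb\cdots$ with $p\ge 1$, and $\bw$ contains infinitely many $b$'s (if it had only finitely many, some proper suffix would equal $a^\omega\prec\bw$, impossible).

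The case where $B$ begins with $b$ is immediate: then $AB$ begins with $a\prec b$, so $AB\prec B$ and we are done. So assume henceforth that $B$ begins with $a$. I would exploit the self-similarity $\bw=f(\bw)$: writing $\bw=c_1c_2\cdots$ with $c_i\in\{a,b\}$, one has the block factorization $\bw=f(c_1)f(c_2)\cdots$, and every block-aligned suffix is of the form $f(\bw^{(m)})$, where $\bw^{(m)}=c_mc_{m+1}\cdots$ is the $m$-th suffix of $\bw$. Since $\bw$ has infinitely many $b$'s, choose an index $m\ge 2$ with $c_m=b$; the corresponding proper suffix is $f(\bw^{(m)})=B\,\tau$ with $\tau=f(\bw^{(m+1)})$. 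As $\bw$ is Lyndon, $\bw\prec B\tau$. Moreover $\bw=A\,t$ with $t=f(\bw^{(2)})$, so $\bw$ begins with $A$.

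Now compare $A$ and $B$. If neither is a prefix of the other, they first differ at some position $j\le\min(|A|,|B|)$. Since $\bw$ begins with $A$ and $B\tau$ begins with $B$, the words $\bw$ and $B\tau$ first differ at position $j$ as well; the Lyndon inequality $\bw\prec B\tau$ then forces $A$ to carry the smaller letter there, i.e. $A\prec B$ with $A$ not a prefix of $B$. By the property of $\prec$ recalled in Section~\ref{sectionPreliminaries} (if $u\prec v$ and $u$ is not a prefix of $v$, then $ux\prec vy$ for all words $x,y$), this yields $AB\prec B$, as desired.

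It remains to treat the case where $A$ and $B$ are prefix-comparable, which I expect to be the main obstacle. Note first that, as $B$ begins with $a$ and there are infinitely many $b$-blocks, $\bw$ also has infinitely many $a$'s. If $B$ is a prefix of $A$, then $B$ is a proper prefix of $AB$, so $AB\prec B$ fails; I would show instead that this configuration is incompatible with $\bw$ being Lyndon, and hence cannot occur. When $B\in a^*$, the suffix $B\tau$ accumulates strictly more leading $a$'s than the prefix $a^p$ of $\bw$ (reading on until the next $a$-block, which exists by the remark above), producing a proper suffix smaller than $\bw$, a contradiction. When $B$ contains a $b$ but is still a proper prefix of $A$, writing $A=B\zeta$ one compares the continuation $\zeta\cdots$ of $\bw$ with the continuation $\tau$ of $B\tau$ and again exhibits a proper suffix at most equal to $\bw$. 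Symmetrically, if $A$ is a prefix of $B$, say $B=A\xi$, then $AB\prec B$ reduces to $A\xi\prec\xi$, which I would derive from the Lyndon inequalities applied to the block-aligned suffixes $B\tau=A\xi\tau$ and $\xi\tau$ (the latter starting inside the $b$-block). In all these comparable sub-cases the genuine difficulty is that $A$ and $B$ may share a long common part, so the relevant suffixes agree far beyond the obvious positions; there I would invoke Fine and Wilf's Theorem~\ref{Th_Fine_Wilf} together with Proposition~\ref{P1.3.2.Lothaire} and the fact that prefixes of Lyndon words are borderless and primitive to force periodicity of $\bw$, contradicting that an infinite Lyndon word is non-periodic.
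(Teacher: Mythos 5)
Your overall strategy is sound and partially overlaps the paper's: you reduce, via Proposition~\ref{Richomme2003BBMS_prop3.3binaire}, to showing $f(ab)\prec f(b)$, and your treatment of the degenerate cases ($f(b)=\varepsilon$, $f(a)\in a^*$) and of the cases where $f(b)$ begins with $b$ or where $A=f(a)$ and $B=f(b)$ are not prefix-comparable is correct. In the latter case your comparison of $\bw$ with the block-aligned proper suffix $B\tau$ is essentially the paper's two-line observation that, if $f(b)$ is not a prefix of $f(a^ib)$, one gets a proper suffix of $f^\omega(a)$ smaller than $f^\omega(a)$. The sub-case $B\in a^*$ is also handled convincingly.

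The genuine gap is that the prefix-comparable cases --- which are exactly where all the difficulty of this proposition lives --- are only promised, not proved. For the sub-case where $B$ contains a $b$ and is a proper prefix of $A$, ``one compares the continuation $\zeta\cdots$ with $\tau$ and again exhibits a proper suffix at most equal to $\bw$'' is not an argument: when $A$ and $B$ share a long common part the comparison does not resolve locally, and it is precisely here that the paper deploys Duval's Proposition~\ref{prefix_Lyndon_words} (the prefix $f(a^ib)$ of a Lyndon word has the form $(uv)^ku$, forcing $f(b)=(uv)^ju'$), then kills $u'\neq\varepsilon$ with Fact~\ref{fait_Preliminaire} applied to the factor $u'uv$ of $f(ba)$, and finally derives $uv=vu$ and periodicity via Proposition~\ref{P1.3.2.Lothaire}. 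Your sketch never uses Proposition~\ref{prefix_Lyndon_words}, and a bare appeal to Fine and Wilf (Theorem~\ref{Th_Fine_Wilf}) does not obviously substitute, since the misaligned suffixes you compare need not exhibit two commensurable periods. Moreover, in your sub-case where $A$ is a proper prefix of $B$, there is no contradiction to be reached at all: order-preserving morphisms generating infinite Lyndon words with $f(a)$ a proper prefix of $f(b)$ do exist (e.g.\ $f(a)=aab$, $f(b)=aabb$, which satisfies Proposition~\ref{propCas2} with $i=2$), so there you must genuinely establish $A\xi\prec\xi$ where $B=A\xi$; when $\xi$ begins with $a$ this is again a non-local comparison, and your proposal gives no actual derivation. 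As it stands, you have proved the easy half of the proposition and deferred its core to unexecuted plans, so the proof is incomplete.
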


We will use the basic fact and the following characterization of prefixes of Lyndon words.
\begin{fact}
\label{fait_Preliminaire}
Given any finite Lyndon word $x$ and any proper non-empty prefix $p$ of $x$,
$px \prec x$. 
\end{fact}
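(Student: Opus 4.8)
The plan is to reduce the comparison $px \prec x$ to the defining suffix property of Lyndon words by cancelling a common prefix. Since $p$ is a proper non-empty prefix of $x$, I would write $x = ps$, where $s$ is a proper suffix of $x$ that is moreover non-empty (it is proper because $p \neq \varepsilon$ forces $s \neq x$, and non-empty because $p \neq x$ forces $s \neq \varepsilon$). With this notation $px = p(ps) = p \cdot x$, so the two words to be compared, namely $px = p \cdot (ps)$ and $x = p \cdot s$, both begin with the common factor $p$.

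Next I would cancel this common prefix. Using the order property recalled in the preliminaries, that for finite words $u \prec v$ implies $wu \prec wv$ (and conversely, since $\preceq$ is a total order on $A^*$, so the implication can be reversed once we know the two words differ), comparing $p \cdot (ps)$ with $p \cdot s$ is equivalent to comparing $ps$ with $s$. As $ps = x$, this shows that $px \prec x$ holds if and only if $x \prec s$.

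Finally I would invoke the suffix characterization of Lyndon words: a non-empty word $x$ is a Lyndon word exactly when every non-empty proper suffix of $x$ is strictly greater than $x$ for the lexicographic order. Since $s$ is precisely such a suffix, we get $x \prec s$, and hence $px \prec x$, as desired.

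There is no genuine obstacle in this argument; it is a short computation once the right reformulation is found. The only point meriting a little care is the cancellation of the common prefix $p$ in the middle step, where one uses that the lexicographic order on finite words is total, so that $wu \prec wv$ is equivalent to $u \prec v$ whenever $u \neq v$ (here $ps = x \neq s$ because $p$ is non-empty). Everything else follows directly from the definition of Lyndon words.
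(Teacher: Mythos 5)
Your proposal is correct and follows essentially the same route as the paper: decompose $x = ps$ with $s$ a proper non-empty suffix, invoke the suffix characterization of Lyndon words to get $x \prec s$, and prepend $p$ to conclude $px \prec ps = x$. The only cosmetic difference is that you phrase the prefix-cancellation step as an equivalence, whereas the paper (and your argument) only needs the one-directional implication $u \prec v \Rightarrow wu \prec wv$.
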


\begin{proof}
Let $q$ be the word such that $x = pq$.
Since $x$ is a Lyndon word and since $x \neq q$ and $x \neq \varepsilon$, 
$x \prec q$. It follows that $px \prec pq = x$.
\qed
\end{proof}

\begin{proposition}[{\cite[Prop. 1.7]{Duval1983JA}}]
\label{prefix_Lyndon_words}
Let $A$ be an ordered alphabet with maximal letter $c$.
Let $P$ be the set of prefixes of Lyndon words. The set $P \cup \{c^k \mid k \geq 2\}$ is equal to the set 
of all words on the form $(uv)^ku$ with $k \geq 1$ an integer and $u$, $v$ some finite words such that 
$v \neq \varepsilon$ and $uv$ is a Lyndon word.
\end{proposition}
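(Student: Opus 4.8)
The plan is to prove the two inclusions $R\subseteq P\cup Q$ and $P\cup Q\subseteq R$ separately, where $R$ denotes the set of words $(uv)^{m}u$ with $m\ge1$, $v\ne\varepsilon$ and $uv$ a Lyndon word; throughout I write $\ell=uv$ for the Lyndon word of such a factorization. The inclusion $\{c^{k}\mid k\ge2\}\subseteq R$ is immediate by taking $u=\varepsilon$, $v=c$. For the rest of $R\subseteq P\cup Q$, let $w=\ell^{m}u\in R$. If $\ell=c$ then $v\ne\varepsilon$ forces $u=\varepsilon$ and $w=c^{m}$, which lies in $\{c^{k}\mid k\ge2\}$ when $m\ge2$ and equals the Lyndon word $c\in P$ when $m=1$. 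If $\ell\ne c$, then the first letter of $\ell$ is smaller than $c$, so $\ell\prec c$; since $\ell$ and $c$ are Lyndon, Proposition~\ref{baseLyndon} gives that $\ell c$ is Lyndon, and then inductively $\ell^{j+1}c=\ell\,(\ell^{j}c)$ is Lyndon because $\ell\prec\ell^{j}c$, so $\ell^{m+1}c$ is a Lyndon word having $w=\ell^{m}u$ as a prefix. Hence $w\in P$.

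For the converse it suffices, $\{c^{k}\mid k\ge2\}\subseteq R$ being settled, to show $P\subseteq R$, which I would do by induction on $|p|$ for $p$ a non-empty prefix of a Lyndon word. If $|p|=1$ then $p$ is a letter, hence Lyndon, hence in $R$. If $|p|\ge2$, write $p=p'\beta$ with $\beta$ a letter; then $p'$ is a shorter non-empty prefix of the same Lyndon word, so by induction $p'=\ell^{k}u$ with $\ell=uv$ Lyndon, $v\ne\varepsilon$ and $k\ge1$. Let $\gamma=v[1]$ be the letter of $\ell$ immediately following the prefix $u$, and compare $\beta$ with $\gamma$. If $\beta=\gamma$ then $u\beta$ is a prefix of $\ell$, and either it is a proper prefix, giving $p=(u'v')^{k}u'$ with $u'=u\beta$ and $\ell=u'v'$, or $u\beta=\ell$, giving $p=\ell^{k+1}\in R$ (with empty trailing factor). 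If $\beta\prec\gamma$ the case is impossible: the factor $w=u\beta$ occupying the last block of $p$ is not a prefix of $p$ (it differs from the prefix $u\gamma$ in its last letter) yet $w=u\beta\prec u\gamma$, so $p$ would violate the observation, recalled in Section~\ref{sectionPreliminaries}, that a prefix of a Lyndon word admits no non-prefix factor smaller than one of its prefixes.

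The remaining case $\beta\succ\gamma$ is the heart of the argument, and I expect it to be the main obstacle: I claim that then $p=\ell^{k}u\beta$ is itself a Lyndon word (so $p\in R$ trivially), and this must be proved directly from $\ell$ being Lyndon and $\beta\succ\gamma$, without using that $p$ is a prefix of a Lyndon word. I would establish it by checking that $p$ is strictly smaller than each of its proper suffixes $T$, organised by where $T$ begins. When $T$ begins strictly inside a copy of $\ell$, it starts with a proper suffix of $\ell$, which is $\succ\ell$ with a genuine first difference since Lyndon words are unbordered, and this difference occurs within the first $|\ell|$ letters, so $p\prec T$. When $T$ begins at a block boundary or inside the final $u$, the comparison reduces to $u\beta$ against a word beginning $u\gamma\cdots$, where $\beta\succ\gamma$ (together with $\gamma=v[1]\succeq\ell[1]=p[1]$, which also handles the degenerate suffix $T=\beta$) forces $p\prec T$; the only delicate point is the subcase where the relevant suffix $u''$ of $u$ is itself a prefix of $\ell$, which I would settle using that the corresponding proper suffix $u''v$ of $\ell$ exceeds $\ell$ precisely because $v[1]$ exceeds the letter of $\ell$ aligned with it, whence $\beta\succ v[1]$ again suffices.

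Collecting the three cases closes the induction and yields $P\subseteq R$, completing the proof. I deliberately avoid the alternative route through the smallest period of $p$ and its primitive root: there one must show the root is Lyndon, and the smaller conjugate witnessing a failure may not fit inside $p$ when $p$ is shorter than twice its period, forcing an appeal to Fine--Wilf (Theorem~\ref{Th_Fine_Wilf}) together with an awkward treatment of short words. The one-letter-at-a-time induction sidesteps this and concentrates all the difficulty in the self-contained Lyndon-ness claim of the case $\beta\succ\gamma$.
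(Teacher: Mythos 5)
The paper offers no proof of this statement at all: it is imported verbatim from Duval's 1983 article, so there is no internal argument to compare yours against, and your proof must stand on its own. It essentially does, and it is in fact a reconstruction of Duval's original mechanism: your trichotomy on the extending letter $\beta$ versus $\gamma = v[1]$ (smaller letter impossible for a prefix of a Lyndon word, equal letter extends the decomposition $(uv)^k u$, larger letter produces a new Lyndon word) is exactly the letter-by-letter invariant driving Duval's factorization algorithm in the cited source. The easy inclusion is fine: $\ell \prec c$ for a Lyndon word $\ell \neq c$ holds because the first letter of a Lyndon word is its least letter (if it were $c$, primitivity would force $\ell = c$), and the induction $\ell^{j+1}c = \ell(\ell^j c)$ via Proposition~\ref{baseLyndon} is correct. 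In the converse, the case $\beta \prec \gamma$ is correctly killed by the paper's observation that a prefix of a Lyndon word has no non-prefix factor smaller than one of its prefixes, applied with the prefix $u\gamma$ and the non-prefix factor $u\beta$; the case $\beta = \gamma$ is a routine bookkeeping of the decomposition; and the Lyndon-ness claim for $\beta \succ \gamma$ is a true (and classical) statement, with your suffix-by-suffix verification organized correctly.

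One justification inside the delicate subcase is stated too strongly, though the conclusion survives. When the suffix $T = u''\beta$ starts inside the final $u$ with $u''$ a proper suffix of $u$ that is also a prefix of $\ell$, you assert that $\ell \prec u''v$ holds \emph{precisely because} $\gamma = v[1]$ exceeds the aligned letter $\ell[|u''|+1]$. That strict inequality can fail: the first difference between $\ell$ and $u''v$ may occur later, i.e. $\ell[|u''|+1] = \gamma$ is possible (take $\ell = aabbaabbb$, $u = aabbaab$, $v = bb$, $u'' = aab$: then $u''$ is a suffix of $u$ and a prefix of $\ell$, and $\ell[4] = b = \gamma$). What is true, and all your argument needs, is $\ell[|u''|+1] \preceq \gamma$: since $u''v$ is a proper non-empty suffix of $\ell$, and $\ell$ is unbordered so $u''v$ is not a prefix of $\ell$, the comparison $\ell \prec u''v$ exhibits a genuine letter difference at some position $i > |u''|$, whence either $i = |u''|+1$ and $\ell[|u''|+1] \prec \gamma$, or $i > |u''|+1$ and $\ell[|u''|+1] = \gamma$. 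Either way $p[|u''|+1] = \ell[|u''|+1] \preceq \gamma \prec \beta = T[|u''|+1]$, with agreement on the first $|u''|$ letters, so $p \prec T$ as you claim. With that one-word repair ($\succeq$ in place of $\succ$) the induction closes and the proof is complete.
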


\begin{proof2}{proposition~{\ref{engendreEntrainePresMotsFinis}}}

Assume by contradiction that $f$ does not preserve
the lexicographic order on finite words. 
By Proposition~\ref{Richomme2003BBMS_prop3.3binaire},
$f(b) \prec f(ab)$ (the equality cannot hold as $f(a)$ is not empty). 
Thus,
for any integer $n \geq 0$, $f(a^nb) \prec f(a^{n+1}b)$.
So, for any integer $n \geq 0$, $f(b) \prec f(a^nb)$. 

From now on let $i$ be the integer such that $a^ib$ is a prefix of $f^\omega(a)$.
Let also $\bw$ be the word such that $f^\omega(a) = f(a^ib) \bw$. Note that $i \geq 1$
since $b^\omega$ is not an infinite Lyndon word.

Observe that $f(b)$ is a prefix of $f(a^ib)$. 
Otherwise, from $f(b) \prec f(a^ib)$, 
we deduce that $f(b)\bw \prec f^\omega(a)$ which contradicts the fact that $f^\omega(a)$ 
is an infinite Lyndon word since $f(b)\bw$ is a proper suffix of $f^\omega(a)$.

As $f^\omega(a)$ is an infinite Lyndon word, it has infinitely many prefixes that are Lyndon words.
Thus its prefix $f(a^ib)$ is a prefix of a Lyndon word. Hence by Proposition~\ref{prefix_Lyndon_words}, 
there exist an integer $k \geq 1$ and
words $u$ and $v$ 
such that $f(a^ib) = (uv)^ku$, $v \neq \varepsilon$ and $uv$ is a Lyndon word.
Consequently $f(b) = (uv)^ju'$ for some $j \geq 0$ and some proper prefix $u'$ of $uv$.

Observe that $ba$ is a factor of $f^\omega(a)$.
Indeed otherwise $f^\omega(a) = a^ib^\omega$ which implies 
$i = 1$ and $f(b) \in b^+$, and so, a contradiction with $f(b) \prec f(ab)$.

Assume that $u' \neq \varepsilon$.
Since $ba$ is a factor of $f^\omega(a)$, 
the word $u'uv$ is a factor of $f(ba)$ and so of $f^\omega(a)$.
By Fact ~\ref{fait_Preliminaire}, $u'uv \prec uv$: 
since $uv$ is a prefix of $f^\omega(a)$, this contradicts
the fact that $f^\omega(a)$ is an infinite Lyndon word.

Thus $u' = \varepsilon$. 
This means that $f(b) = (uv)^j$ with $j \geq 0$.
If $j = 0$, $f(b) = \varepsilon$ and 
$f^\omega(a) = f(a)^\omega$ is a periodic word: a contradiction with the fact it is an infinite Lyndon word.
Thus $j \geq 1$.
Since $f(b)$ is a suffix of $f(a^ib) =
(uv)^ku$, we get $uv = vu$. 
Remember that $v \neq \varepsilon$. 
If $u \neq \varepsilon$, by Proposition~\ref{P1.3.2.Lothaire}, the word
$uv$ is not primitive: a contradiction with the 
primitivty of the Lyndon word $uv$.
So $u = \varepsilon$.  

This implies that both $f(a)$ and $f(b)$ are powers of
$v$. So $f^\omega(a) = v^\omega$. This is a final contradiction with the fact that
an infinite Lyndon word cannot be periodic. The morphism $f$ preserves the order on finite words over $\{a \prec b\}$.
\end{proof2}

\medskip

Note that the converse of Proposition~\ref{engendreEntrainePresMotsFinis} does not hold. Consider, for instance, the morphism $f$ defined by $f(a) = abb$ and $f(b) = baa$. This morphism preserves the lexicographic order on infinite word 
but the word $f^\omega(a)$ is not an infinite Lyndon word.

\medskip

One could expect a stronger necessary condition as, for instance, a preservation of infinite Lyndon words.
The next example shows that this stronger condition is not necessary.

Let $f$ be defined by $f(a) = aab$ and $f(b) = abaabab$.
The word $\bw = abbabbb^\omega$ is an infinite Lyndon word.
Its image by $f$ begins with 
$ubua$ where $u = aababaaba$.
Hence $f$ does not preserve infinite Lyndon words.
Nevertheless using Proposition~\ref{propCas2}, one can verify that $f$ generates an infinite Lyndon word.

\section{\label{sectionCase2}Generating infinite Lyndon words beginning with aa}

We consider here the case of generated words beginning with $aa$.

\begin{proposition}
\label{propCas2}
Let  $f$ be an endomorphism over $\{a\prec b\}$ prolongable on $a$ 
such that
$f^\omega(a)$ begins with $a^ib$ for some integer $i \geq 2$. 

The word 
$f^\omega(a)$ is an infinite Lyndon word
if and only if
\begin{enumerate}
\item $f$ preserves the lexicographic order on finite words, and,
\item $f(a^ib)$ is a Lyndon word.
\end{enumerate}
\end{proposition}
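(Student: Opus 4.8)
The plan is to prove the two implications separately, treating the \emph{if} part as the substantial one.

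For the \emph{if} direction, assume that $f$ preserves the lexicographic order on finite words and that $w := f(a^ib)$ is a Lyndon word. The key observation is that, since $a^ib$ is a prefix of $f^\omega(a)$ and $f$ is prolongable, each word $f^n(a^ib)$ is a prefix of $f^\omega(a)$, and their lengths tend to infinity. Hence it suffices to prove that every $f^n(a^ib)$ is a Lyndon word: this directly exhibits infinitely many Lyndon prefixes of $f^\omega(a)$. Writing $f^n(a^ib) = f^n(a)^i f^n(b)$, I would argue by induction on $n$, the cases $n=0$ (the word $a^ib$, which is plainly Lyndon since $i \geq 1$) and $n=1$ (hypothesis~2) serving as base. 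The inductive step amounts to showing that $f$ maps the Lyndon word $L_n := f^n(a^ib)$ to a Lyndon word.

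To carry out this step I would show that every proper non-empty suffix $S$ of $f(L_n)$ satisfies $f(L_n) \prec S$, distinguishing two kinds of suffixes. If $S = f(\sigma)$ is the image of a proper suffix $\sigma$ of $L_n$, then $L_n \prec \sigma$ by the Lyndon property of $L_n$, so $f(L_n) \prec f(\sigma) = S$ directly from order preservation (hypothesis~1). The remaining suffixes cut inside the image of a single letter. Here I would first record two consequences of hypothesis~2 applied to $w = f(a)^i f(b)$: as a Lyndon word of length at least two, $w$ ends with $b$, so $f(b)$ ends with $b$; and $f(a)$ begins with exactly $a^ib$, so $f(L_n)$ begins with exactly $i$ letters $a$ before its first $b$. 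Comparing first letters then settles every internal suffix that begins with $b$ or with fewer than $i$ letters $a$.

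I expect the main obstacle to be the internal suffix that begins with exactly $a^i$, i.e.\ that matches the whole initial block $a^ib$ of $f(L_n)$; ruling out $S \prec f(L_n)$ here is where the argument really uses that $w$ itself is Lyndon. I would handle it by relating such a suffix to a proper suffix of $w$ (which exceeds $w$ because $w$ is Lyndon) and propagating the comparison through the self-similar factorisation $f(L_n) = f(a)^i f(b)\cdots$, calling on Fine and Wilf's Theorem~\ref{Th_Fine_Wilf} to exclude the periodic coincidences that would otherwise leave the order undecided. For the \emph{only if} direction, hypothesis~1 is exactly Proposition~\ref{engendreEntrainePresMotsFinis}, while for hypothesis~2 one notes that $w = f(a^ib)$ is a prefix of the infinite Lyndon word $f^\omega(a) = w\mathbf{z}$, hence a prefix of a finite Lyndon word; every proper suffix of $f^\omega(a)$ exceeds it, and for a proper suffix $s$ of $w$ that is not a prefix of $w$ this immediately yields $w \prec s$, so $w$ is Lyndon as soon as it is borderless. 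The delicate point in this direction is thus to exclude a non-empty border of $w$, which I would obtain from the description $w = (uv)^k u$ with $uv$ Lyndon given by Proposition~\ref{prefix_Lyndon_words} together with Fact~\ref{fait_Preliminaire}, using once more the shape $w = f(a)^i f(b)$ with $i \geq 2$ to force $k = 1$ and $u = \varepsilon$.
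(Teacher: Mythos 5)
Your \emph{if} direction is sound and takes a genuinely different route from the paper. The paper proves the stronger Lemma~\ref{xx2} (the image of \emph{every} Lyndon word beginning with $a^ib$ is Lyndon) by structural induction on the factorization $\ell m$ of Proposition~\ref{baseLyndon}; you instead induct on $n$ and compare suffixes of $f(L_n)$ directly, and this plan does complete. Every proper suffix of $f(L_n)$ either equals $f(\sigma)$ for a proper suffix $\sigma$ of $L_n$ (settled by order preservation), or cuts inside the image of a letter; since $L_n$ is Lyndon with prefix $a^ib$, it ends with $b$ and contains no $a^{i+1}$, so every occurrence of $a$ lies in a block $a^jb$ with $j \leq i$, and any internal suffix has the form $t\,f(\cdot)$ with $t$ a proper non-empty suffix of the Lyndon word $w = f(a^ib)$ (because $f(a^jb)$ is a suffix of $w$). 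Since Lyndon words are borderless, $w \prec t$ with a strict letter difference inside $t$, which decides the comparison whatever follows. In particular your anticipated ``periodic coincidences'' cannot arise and the appeal to Fine and Wilf's Theorem~\ref{Th_Fine_Wilf} is unnecessary in this half; borderlessness of $w$ is what really closes your ``exactly $a^i$'' case. This is arguably more elementary than Lemma~\ref{xx2}, at the price of proving the property only for the prefixes $f^n(a^ib)$ rather than for all Lyndon words $a^ibv$.

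The genuine gap is in the \emph{only if} direction. Your reduction of condition~2 to ``$w$ is borderless'' is correct, but the closing sentence --- ``using once more the shape $w = f(a)^if(b)$ with $i \geq 2$ to force $k = 1$ and $u = \varepsilon$'' --- is exactly the hard content of the paper's Lemma~\ref{suff_aa}, and it does not follow from Proposition~\ref{prefix_Lyndon_words} and Fact~\ref{fait_Preliminaire} alone. Being a prefix of an infinite Lyndon word in no way forbids borders: all such prefixes have the form $(uv)^ku$ and are typically bordered, so the forcing must come from the morphic self-similarity, and extracting it is a page of work in the paper. One must first show $f(a)$ is a power of a Lyndon word (Lemma~\ref{U1}), use Fine and Wilf to separate the two candidate periods, and prove the last letter of $f(a)$ is $b$ so that $ba^ib$, and hence a factor $pv$ with $pv \prec v$, genuinely occurs in $f^3(a)$; and even after getting $u = \varepsilon$, excluding $w = v^\ell$ with $\ell \geq 2$ needs a further dichotomy on whether $f(v)$ is a prefix of $v^\omega$, yielding either periodicity of $f^\omega(a)$ or a forbidden factor $v^{k+1}$. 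That this exclusion is delicate rather than formal is shown by the companion case: when $f^\omega(a)$ begins with $ab$, Proposition~\ref{propCas3} allows $f(ab^i)$ to be a proper power of a Lyndon word ($f(a) = abbab$, $f(b) = b$ gives $f(ab) = (abb)^2$ while $f^\omega(a)$ is an infinite Lyndon word), so any argument forcing $k = 1$ must exploit $i \geq 2$ in an essential, quantitative way --- and your sketch never says how.
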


The proof of this proposition is based on the next lemmas.

\begin{lemma}
\label{xx2}
Let $f$ be a morphism that preserves the order on finite words.
Let $i \geq 2$. Assume that $f(a^ib)$ is a Lyndon word.
For any word $v$ such that 
$a^ibv$ is a Lyndon word, 
the word $f(a^ibv)$ is also a Lyndon word.
\end{lemma}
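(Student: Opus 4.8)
The plan is to verify directly that every proper nonempty suffix of $f(a^ibv)$ is lexicographically greater than $f(a^ibv)$ itself. Writing $g = f(a)$ and $h = f(b)$ (note $h \neq \varepsilon$, since an order-preserving morphism is non-erasing), I would first record that $w := a^ibv$ ends with the letter $b$: being a Lyndon word of length at least two, it cannot end with the minimal letter $a$, for otherwise its proper suffix $a$ would be a proper prefix and hence smaller than $w$, contradicting the suffix characterization of Lyndon words. Consequently $w$ splits uniquely, by cutting immediately after each occurrence of $b$, into blocks $w = a^{i_1}b\,a^{i_2}b\cdots a^{i_m}b$ with $i_1 = i$, and applying $f$ gives $f(w) = B_1B_2\cdots B_m$, where $B_k := f(a^{i_k}b) = g^{i_k}h$ is nonempty and $B_1 = f(a^ib)$ is the Lyndon word supplied by hypothesis. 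The base case $v = \varepsilon$ is exactly this hypothesis.

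Two structural facts drive the argument. First, comparing $w$ with its proper suffix $a^{i_k}b\cdots a^{i_m}b$, which is greater than $w$ because $w$ is Lyndon, forces $i_k \le i$ for every $k$: if some $i_k$ exceeded $i$, that suffix would agree with $w$ on $a^i$ and then carry an $a$ where $w$ carries $b$, making it smaller. Hence each $B_k = g^{i_k}h$ is a suffix of $B_1 = g^{i}h = g^{\,i-i_k}B_k$. Second, since $B_1$ is a Lyndon word, every proper nonempty suffix of $B_1$ is strictly greater than $B_1$, and $B_1$ has no nonempty proper border.

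I would then classify an arbitrary proper nonempty suffix $s$ of $f(w)$ by the block in which it begins. If $s$ begins at a block boundary, say $s = B_kB_{k+1}\cdots B_m = f(a^{i_k}b\cdots a^{i_m}b)$ with $k \ge 2$, then $s$ is the $f$-image of a proper suffix of the Lyndon word $w$, which is greater than $w$; as $f$ preserves the order, $f(w) \prec s$. If instead $s$ begins strictly inside a block $B_k$, write $s = t\,B_{k+1}\cdots B_m$ with $t$ a proper nonempty suffix of $B_k$, hence a proper nonempty suffix of $B_1$ of length strictly less than $|B_1|$. Then $t \succ B_1$; moreover $t$ is not a prefix of $B_1$ (a nonempty proper suffix of $B_1$ that were also a prefix would be a border of the Lyndon word $B_1$), and $B_1$ is not a prefix of $t$ (it is longer). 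By the preliminary remark that $u \prec v$ with $u$ not a prefix of $v$ yields $ux \prec vy$ for all words $x,y$, I conclude $f(w) \prec s$. Since every proper nonempty suffix exceeds $f(w)$, the word $f(w)$ is a Lyndon word.

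The delicate step is the interior case: the whole argument rests on turning an arbitrary interior suffix $t$ into a proper suffix of the single Lyndon word $B_1$, and it is precisely the border-freeness of $B_1$ — that is, the hypothesis that $f(a^ib)$ is a Lyndon word — together with the bound $i_k \le i$ that rules out the only dangerous configuration, namely $t$ being a prefix of $B_1$. I expect the boundary case to be routine once order preservation is invoked, so the hypothesis on $f(a^ib)$ is really what is doing the work.
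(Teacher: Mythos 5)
Your proof is correct, but it takes a genuinely different route from the paper's. The paper argues by induction on $|v|$: it factors $a^ibv = \ell m$ into two Lyndon words with $\ell \prec m$ (Proposition~\ref{baseLyndon}), chooses $m$ of minimal length, proves via a somewhat delicate sub-argument that $a^ib$ must then be a prefix of $\ell$, and splits into cases according to whether $a^ib$ is also a prefix of $m$; the inductive hypothesis supplies the Lyndon-ness of $f(\ell)$ and possibly $f(m)$. You instead avoid induction and Proposition~\ref{baseLyndon} entirely: you cut $w=a^ibv$ after each $b$ into blocks $a^{i_k}b$, observe that the Lyndon property of $w$ forces $i_k \le i$ so that every block image $f(a^{i_k}b)$ is a suffix of the Lyndon word $B_1=f(a^ib)$, and then classify each proper suffix of $f(w)$ as either a block-boundary suffix (handled in one line by order preservation, since it is the image of a proper suffix of $w$) or an interior suffix (handled by the suffix property and border-freeness of $B_1$, plus the preliminary remark that $u\prec v$ with $u$ not a prefix of $v$ propagates under arbitrary extensions — note that here only ``$B_1$ is not a prefix of $t$'', immediate from $|t|<|B_1|$, is actually needed; your border argument for ``$t$ is not a prefix of $B_1$'' is redundant, since $t\prec B_1$ would contradict $B_1\prec t$). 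Your interior case coincides with the key idea in the paper's second case ($s'$ a proper suffix of $f(a^jb)$, $j<i$, hence a proper suffix of the Lyndon word $f(a^ib)$), but applying it globally to all blocks eliminates the factorization machinery. What each buys: the paper's inductive structure mirrors its proof of Lemma~\ref{I1} and reuses standard Lyndon tools; yours is shorter and more self-contained, and, as a bonus, it never uses the hypothesis $i \ge 2$, so it in fact establishes the statement for all $i \ge 1$ under the hypothesis that $f(a^ib)$ is a Lyndon word.
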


\begin{proof}
We act by induction on $|v|$.

By hypothesis the result holds when $|v| = 0$.
Assume that $|v| \geq 1$. 
By Proposition~\ref{baseLyndon}, 
there exist Lyndon words $\ell$ and $m$ 
such that $a^ibv = \ell m$ and $\ell \prec m$.
Let us choose $m$ with the smallest  length as possible.

Let us prove that $a^ib$ is a prefix of $\ell$.
Assume that this does not hold.
Then $\ell = a$ and $m =a^{i-1}bv$. 
Consequently, since $m$ is a Lyndon word, $a^i$ is not a factor of $a^{i-1}bv$. 
Let $m'$ be the suffix of $m$ such that $bm' \in ba^+b^+$.
If such a factor does not exist (that is if $m \in a^{i-1}b^+$), let $m' = b$. 
In all cases, $m'$ is a Lyndon word. 
Let $\ell'$ be the word such that $\ell m = \ell'm'$. 
The word $a^ib$ is a prefix of $\ell'$ (when $m =b$, remember that $|v| \geq 1$).
Observe that $\ell' \prec m'$.
The last letter of $\ell'$ is $b$. Indeed, by construction, it could be the letter $a$ only if 
$m \in a^{i-1}b^+$, that is if $m = a^{i-1}b^k$ for some $k \geq 1$. But then $m' = a$
and $\ell' = a^{i-1}b^{k-1}$. As $|v| \geq 1$, we have $k \geq 2$, and so, the last letter of $\ell'$ is $b$.
Let $s$ be a proper non-empty suffix of $\ell'$. 
Let $j \geq 0$ be the integer such that $s$ begins with $a^jb$.
Since $a^ib$ is not a factor $a^{i-1}bv$ and since $a^{i-1}bv$ is a Lyndon word, 
we deduce that $j < i$. 
So $\ell' \prec s$. Hence $\ell'$ is a Lyndon word: this contradicts the choice made on $m$ and proves that $a^ib$ is a prefix of $\ell$.

If $a^ib$ is a prefix of $m$ then, by inductive hypothesis,
$f(\ell)$ and $f(m)$ are Lyndon words.
Since $\ell \prec m$ and $f$ preserves the order on finite words, 
$f(\ell) \prec f(m)$. Proposition~\ref{baseLyndon} implies that 
$f(a^ibv) = f(\ell m)$ is a Lyndon word.

From now on assume that $a^ib$ is not a prefix of $m$.
Observe that this implies that $a^i$ is not a factor of $m$. 
Indeed since $a^ibv = \ell m$ is a Lyndon word, 
for any factor $a^j$ of $\ell m$, we have $j \leq i$.
So $m$ begins with $a^kb$ for some integer $k < i$. 
Moreover as $m$ is a Lyndon word, for any factor $a^j$ of $m$, we have $j \leq k < i$.
Let $s$ be a proper non-empty suffix of $\ell m$.  
If $|s| \leq |f(m)|$ then 
there exist an integer $j < i$,
a word $m'$ 
and a non-empty suffix $s'$ of
$f(a^jb)$
such that 
$s = s'f(m')$.  
The word $s'$ is a proper non-empty suffix of the Lyndon word
$f(a^ib)$. So $f(a^ib) \prec s'$ and $f(a^ibv) \prec s' \preceq s$.
If $|s| > |f(m)|$ then $s = s'f(m)$ with $s'$ a proper non-empty suffix of $f(\ell)$.
By inductive hypothesis, $f(\ell)$ is a Lyndon word.
Thus $f(\ell) \prec s'$ and consequently $f(a^ibv) = f(\ell m) \prec s' \prec s'f(m) = s$. 
The word $f(a^ibv)$ is a Lyndon word.
\qed\end{proof}

\begin{lemma}
\label{U1}
Let $u$ be a non-empty word.
If $uu$ is a prefix of a Lyndon word, 
then $u$ is a power of a Lyndon word.
\end{lemma}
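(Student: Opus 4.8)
The plan is to reduce to the case where $u$ is primitive and then to exploit the elementary property of prefixes of Lyndon words recalled in Section~\ref{sectionPreliminaries}. First I would write $u = w^k$, where $w$ is the primitive root of $u$ (so $w$ is primitive and $k \geq 1$). It suffices to prove that $w$ is a Lyndon word, for then $u = w^k$ is a power of a Lyndon word, which is exactly the conclusion; no generality is lost, since the primitive root of a power of a Lyndon word is that Lyndon word (Lyndon words being primitive). As $uu = w^{2k}$ with $2k \geq 2$, the word $ww$ is a prefix of $uu$, hence a prefix of the Lyndon word having $uu$ as a prefix. Thus $ww$ is a prefix of a Lyndon word, and the whole problem reduces to: a primitive word $w$ such that $ww$ is a prefix of a Lyndon word is itself a Lyndon word.

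Then I would argue by contradiction, assuming $w$ is not a Lyndon word. By the definition of Lyndon words there is a factorization $w = w_1 w_2$ with $w_1, w_2$ nonempty and $w_2 w_1 \preceq w$. The equality $w_2 w_1 = w$ is impossible, for it would give $w_1 w_2 = w_2 w_1$, whence by Proposition~\ref{P1.3.2.Lothaire} the words $w_1$ and $w_2$ are powers of a common word and $w$ fails to be primitive. Hence $w_2 w_1 \prec w$. Now comes the key observation: in $ww = w_1 w_2 w_1 w_2$ the factor of length $|w|$ starting right after the first occurrence of $w_1$ is exactly $w_2 w_1$, and since it differs from the length-$|w|$ prefix $w_1 w_2 = w$ it is a factor of $ww$ that is not a prefix of $ww$. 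Taking the prefix $w$ together with this factor $w_2 w_1 \prec w$ contradicts the property, recalled in Section~\ref{sectionPreliminaries}, that a prefix of a Lyndon word cannot contain a non-prefix factor smaller than one of its prefixes. Therefore $w$ is a Lyndon word and the lemma follows.

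The delicate point, and the step I would verify most carefully, is the correct identification of $w_2 w_1$ as a factor of $ww$ that is provably not a prefix: this is precisely where the primitivity of $w$ is essential, both to exclude $w_2 w_1 = w$ and to guarantee the strict inequality $w_2 w_1 \prec w$. An alternative route would use Fine and Wilf's Theorem~\ref{Th_Fine_Wilf}: writing $ww = (pq)^m p$ with $pq$ a Lyndon word by Proposition~\ref{prefix_Lyndon_words}, the infinite words $w^\omega$ and $(pq)^\omega$ share the prefix $ww$, and the length condition $2|w| \geq |w| + |pq| - \gcd(|w|,|pq|)$ needed to conclude that $w$ and $pq$ are powers of a common (hence Lyndon) word holds automatically when $m \geq 2$. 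The case $m = 1$, where $ww = pqp$ and this bound is no longer immediate, is the main obstacle for this second approach, which is why I would favour the short observation-based argument above.
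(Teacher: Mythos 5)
Your proof is correct, but it takes a genuinely different route from the paper. The paper does not pass through the primitive root at all: it applies Proposition~\ref{prefix_Lyndon_words} (Duval's characterization) to the prefix $u$ itself, writing $u = (xy)^k x$ with $xy$ a Lyndon word, and then rules out $x \neq \varepsilon$ by exhibiting the suffix $x(xy)^k x v$ of $uuv$, which is smaller than $uuv$ because $xy \prec y$ forces $xxy \prec xy$ at a first difference; this leaves $u = y^k$ with $y$ Lyndon, so the Lyndon root is handed over directly by Duval's decomposition. You instead reduce to the primitive root $w$ of $u$, note that $ww$ is still a prefix of the Lyndon word, and show $w$ is Lyndon by contradiction: a non-Lyndon primitive $w$ admits a factorization $w = w_1 w_2$ with $w_2 w_1 \prec w$ strictly (equality being excluded by Proposition~\ref{P1.3.2.Lothaire} and primitivity), and the conjugate $w_2 w_1$ then sits inside $ww$ as a non-prefix factor of length $|w|$ smaller than the prefix $w$, contradicting the observation on prefixes of Lyndon words stated in Section~\ref{sectionPreliminaries}. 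All the steps check out, including the implicit handling of $|w|=1$ (a single letter is Lyndon, so the contradiction hypothesis forces a nontrivial factorization) and the fact that $w_2w_1 \prec w$ is a first-difference comparison since the two words have equal length and are distinct, which is exactly what the observation requires. What each approach buys: the paper's argument is a one-step application of a cited structural result and identifies the Lyndon root explicitly, whereas yours is more elementary and self-contained, needing only the commutation lemma and the elementary remark about non-prefix factors, at the cost of the preliminary reduction to the primitive root. Your closing remark about the Fine and Wilf route (Theorem~\ref{Th_Fine_Wilf}) is also accurate: the case $m=1$, where $uu = pqp$, is precisely where the length bound fails, and the paper likewise avoids that path.
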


\begin{proof}
Since $uu$ is a prefix of a Lyndon word, 
also $u$ is a prefix of this Lyndon word.
By Proposition~\ref{prefix_Lyndon_words},
there exist words $x$ and $y$ such that $y \neq \varepsilon$,
$xy$ is a Lyndon word and for some integer $k \geq 1$, $u = (xy)^kx$.
If $x \neq \varepsilon$, since $xy$ is a Lyndon word,
we have $xy \prec y$ and so $xxy \prec xy$.
Then for any word $v$,  
the word $x(xy)^kxv$ is a suffix of $uuv$ 
and $x(xy)^kxv \prec uuv$. 
This contradicts the fact that $uu$ is a prefix of a Lyndon word.
So $x = \varepsilon$. This implies that $u = y^k$ and $y$ is a Lyndon word.
\qed\end{proof}

\begin{lemma}
\label{suff_aa}
Assume that $f$ is an endomorphism over $\{a \prec b\}$ prolongable on $a$
such that $f^3(a)$ is a prefix of a Lyndon word, 
$f^\omega(a)$ begins with the word $a^ib$ for some integer $i \geq 2$
and $f^\omega(a)$ is not periodic.
Then $f(a^ib)$ is a Lyndon word.
\end{lemma}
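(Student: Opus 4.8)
The plan is to show that $f(a^ib)$ is a prefix of a Lyndon word and that it is unbordered; together these force it to be a Lyndon word. First I would pin down the shape of $f(a)$: writing $f(a)=a^pw$ with $w$ empty or beginning with $b$, non‑periodicity rules out $w=\varepsilon$ (which would give $f^\omega(a)=a^\omega$), and comparing the initial $a$‑runs of the prefix $f(a)$ and of $f^\omega(a)=a^ib\cdots$ forces $p=i$. Hence $f(a)=a^ibz$, so $L:=f(a^ib)=f(a)^if(b)$ is a prefix of $f^2(a)$, of $f^3(a)$, and therefore—by hypothesis—of a Lyndon word. The conceptual key is the reduction: \emph{a nonempty prefix $L$ of a Lyndon word is itself a Lyndon word if and only if it has no nonempty border.} Lyndon words are unbordered; conversely, if $\Lambda=Lt$ is Lyndon and $s$ is a proper nonempty suffix of $L$ with $s\prec L$, then either $s$ is a border of $L$, or $s$ and $L$ differ at a position $\le|s|$, in which case $st\prec Lt=\Lambda$, contradicting that the proper suffix $st$ of $\Lambda$ satisfies $\Lambda\prec st$. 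Thus it suffices to prove $L$ is unbordered.

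Assume for contradiction that $L$ is bordered. Since $i\ge2$, $f(a)f(a)$ is a prefix of $L$, hence of a Lyndon word, so Lemma~\ref{U1} gives $f(a)=\ell^s$ with $\ell$ a Lyndon word; $\ell$ must begin with $a^ib$, for otherwise $\ell=a$ and $f(a)=a^s$ would contain no $b$. By Proposition~\ref{prefix_Lyndon_words} write $L=(uv)^ku$ with $uv$ Lyndon, $v\neq\varepsilon$, $k\ge1$; being bordered means $(k,u)\neq(1,\varepsilon)$, and $q:=|uv|$ is a period of $L$. The heart of the argument is a period comparison between $q$ and $|\ell|$: applying Fine and Wilf's Theorem~\ref{Th_Fine_Wilf} to the prefix $f(a)^i=\ell^{si}$ of $L$, which carries the two periods $|\ell|$ and $q$, primitivity of $\ell$ forces $|\ell|\mid q$ and then $q=|\ell|$, i.e.\ $uv=\ell$. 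The subtle case $q>|f(a)|$ is excluded separately: there $uv$ would begin with $\ell\ell$, impossible for a Lyndon word having $\ell$ as a proper prefix (a short computation via Fact~\ref{fait_Preliminaire}, or the period it would impose on $\ell$ through Proposition~\ref{P1.3.2.Lothaire}, contradicts primitivity of $\ell$).

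Once $uv=\ell$, the word $L=\ell^{si}f(b)$ has period $|\ell|$, so $f(b)=\ell^ru$ with $u$ a proper prefix of $\ell$. If $u=\varepsilon$, then $f(a)$ and $f(b)$ are both powers of $\ell$, whence $f^\omega(a)=\ell^\omega$ is periodic, contradicting the hypothesis. If $u\neq\varepsilon$, then $f(a)=\ell^s$ ends with the last letter of $\ell$, namely $b$, so $ba$ occurs in $f(a)f(a)$ and hence in $f^2(a)$; consequently $f(ba)=f(b)f(a)=\ell^ru\ell^s$ is a factor of $f^3(a)$. As $u$ is a nonempty proper prefix of the Lyndon word $\ell$, Fact~\ref{fait_Preliminaire} yields $u\ell\prec\ell$, so the length‑$|\ell|$ prefix $w$ of $u\ell$ satisfies $w\prec\ell$; since $\ell$ is a prefix of $f^3(a)$ and $w$ is a factor of $f^3(a)$ that is not a prefix of it, this is exactly the forbidden configuration for a prefix of a Lyndon word, contradicting that $f^3(a)$ is one. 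In every case we reach a contradiction, so $L$ is unbordered, hence Lyndon.

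The step I expect to be the main obstacle is the period analysis establishing $uv=\ell$, and especially the exclusion of the case $q>|f(a)|$, where the self‑overlapping structure $L=\ell^{si}f(b)$ must be reconciled with the primitivity of $\ell$ by hand; by contrast, the unbordered reduction and the final forbidden‑factor contradiction are comparatively routine.
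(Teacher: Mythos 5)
Your overall architecture is attractive and partly sound: the reduction ``a prefix of a Lyndon word is Lyndon if and only if it is unbordered'' is correct (and is a genuinely different framing from the paper, which instead takes the normal form $L=(uv)^ku$ of Proposition~\ref{prefix_Lyndon_words} and proves directly that $u=\varepsilon$ and $k=1$). Your main line is also essentially right and parallels the paper's easier cases: when $q=|uv|\le|f(a)|$, Fine and Wilf (Theorem~\ref{Th_Fine_Wilf}) plus primitivity does force $uv=\ell$, and then $f(b)=\ell^r u$ with $u=\varepsilon$ giving periodicity and $u\neq\varepsilon$ giving the forbidden factor $u\ell\prec\ell$ inside $f^3(a)$ via $f(ba)$ --- this is the paper's ``$v=u$'' case, with the same use of Fact~\ref{fait_Preliminaire}. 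Even the intermediate range $|f(a)|<q\le|f(a^i)|$ is salvageable, though not by what you wrote: there $uv=\ell^t\pi$ with $\pi$ a nonempty (by primitivity of $uv$) proper prefix of $\ell$, so $\pi$ is a border of the Lyndon word $uv$, a contradiction.

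The genuine gap is the case $q>|f(a^i)|$, i.e.\ when the Lyndon root $uv$ of the bordered prefix extends past $f(a)^i$ into $f(b)$, and your one-line exclusion rests on a false claim: a Lyndon word \emph{can} begin with $\ell\ell$ and have the Lyndon word $\ell$ as a proper prefix --- for instance $aabaabb$ is a Lyndon word beginning with $(aab)^2$ --- so no short computation via Fact~\ref{fait_Preliminaire} or Proposition~\ref{P1.3.2.Lothaire} yields a contradiction with primitivity. This is not a repairable slip in wording: the sub-case $u=\varepsilon$, $k\ge2$ of your decomposition, where $L$ is a proper power of a Lyndon word longer than $f(a^i)$, is exactly the hard case of the paper's proof (its ``$\ell\ge2$'' analysis, which occupies most of Lemma~\ref{suff_aa}), and it cannot be killed by local combinatorics on $L$ alone. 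The paper disposes of it only by iterating the morphism one more time: since $a^ib$ is a prefix of the root, the power $L$ is a prefix of the image of the root, so the image of the root squared sits inside $f^3(a)$; then either that image is a prefix of the root's infinite power, forcing $f^\omega(a)$ periodic, or there is a first mismatch, and, using that the root is a suffix of $f(b)$ ending in $b$, one exhibits inside $f^3(a)$ a factor strictly smaller than the corresponding prefix --- the forbidden configuration. Nothing in your proposal performs, or can shortcut, this depth-three analysis; you correctly identified this step as the main obstacle, and it is precisely where the substance of the paper's proof lies.
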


\begin{proof}
Let us first observe that $f(a)$ begins with $a^i b$. Indeed otherwise $f(a)$ is a power of $a$ contradicting the non-periodicity of $f^\omega(a)$. 

Observe also that the word $f(a^ib)$ is a prefix of $f^2(a)$  which itself is a prefix of $f^3(a)$.
Hence $f(a^ib)$ is also a prefix of a Lyndon word.
By Proposition~\ref{prefix_Lyndon_words}, 
there exists a Lyndon word $v$, a proper prefix $p$ of $v$ ($p$ may be empty)
and an integer $\ell \geq 1$ such that $f(a^ib) = v^\ell p$.
Since $i \geq 2$, by Lemma~\ref{U1}, $f(a)$ is a power of a Lyndon word $u$.

If $v = u$, from  $f(a^ib) = v^\ell p$, we get $f(b) = v^{\ell'} p$ for some integer $\ell'$. 
In particular $p$ is a suffix of $f(b)$. 
If $p = \varepsilon$, then $f^\omega(a) = v^\omega$ a contradiction with its non-periodicity.
Assume now that $p \neq \varepsilon$.
Since $i \geq 2$, the word $a^ib$ occurs twice in $f(a^ib)$ which is a prefix of $f^2(a)$. 
Thus $ba$ is a factor of $f^2(a)$ and $f(ba)$ is a factor of $f^3(a)$.
Then the word $pu$ is a factor of $f^3(a)$. 
Note also that $u$ is a prefix of $f^3(a)$. 
As $p$ is a proper non-empty prefix of the Lyndon word $u$, 
by Fact~\ref{fait_Preliminaire}, $pu \prec u$.
This contradicts the fact that $f^3(a)$ is a prefix of a Lyndon word.
Thus $v\neq u$.

Since $i \geq 2$, $u^2$ is a prefix of $v^{\ell+1}$.
If $|u| \geq |v|$, by Theorem~\ref{Th_Fine_Wilf}, 
$u$ and $v$ are powers of the same word. This is not possible as $u \neq v$ and 
both words $u$ and $v$ are primitive (since they are Lyndon words).
Thus $|v| > |u|$. 

Note that $v$ is not a factor of $f(a^i) = f(a)^i$. 
Indeed if $v$ is a factor of $f(a)^i$ then it is a prefix of a power of $u$, and so, a prefix of $u$ is both a prefix and a suffix of $v$: this is impossible since $v$ is a Lyndon word.
It follows  that $p$ is a proper suffix of $f(b)$.

Observe that $a^ib$ is a prefix of $f(a)$ and so $f(a)a^ib$ is a prefix of $f(aa)$ and so of $f^2(a)$.
Since $f^2(a)$ is a prefix of a Lyndon word, it cannot contain $a^{i+1}$ as a factor
and so the last letter of $f(a)$ must be $b$. 
Hence $ba^ib$ and $f(ba^ib)$ are  factors of $f^3(a)$. 
This implies that $pv$ is also a factor of 
$f^3(a)$. 
By Fact~\ref{fait_Preliminaire}, $pv \prec v$ if $p \neq \varepsilon$. 
This contradicts the fact that $f^3(a)$ is a prefix of a Lyndon word.
So $p =\varepsilon$ and $f(a^ib) = v^\ell$.
Assume that $\ell \geq 2$.

Since $a^ib$ is a prefix of $f(a)$
and since $f(a^ib) = v^\ell$, 
$a^ib$  is also a prefix of $v$.
Thus $v^\ell$ is a prefix of $f(v)$ itself a prefix of $f^2(a)$.
Since $\ell \geq 2$, $f(v)f(v)$ and $f(v)v^\ell$
are prefixes of $f^3(a)$. 

Let us prove that $f(v)$ is not a prefix of $v^\omega$.
Assume by contradiction that $f(v) = v^kp'$
for some proper prefix $p'$ of $f(v)$ and
some integer $k$.
Since $v^\ell$ is a prefix of $f(v)$,
we have $k \geq \ell \geq 2$.
If $p' \neq \varepsilon$, by Fact~\ref{fait_Preliminaire}, $p'v \prec v$.
Since $p'v$ is a factor of $f(v)f(v)$,
this contradicts the fact that $f^3(a)$ is a prefix of a Lyndon word.
So $p' = \varepsilon$ and $f(v) = v^k$.
Hence by induction, for all $n \geq 0$,
$f^n(v) \in v^+$.  Moreover we have $\lim_{n \to \infty} |f^n(v)| = \infty$.
So $f^\omega(a) = v^\omega$: a contradiction with the non-periodicity of $f^\omega(a)$.

So $f(v)$ is not a prefix of $v^\omega$.
There exists an integer $k$,
a proper prefix $\pi$ of $v$
and letters $\alpha$, $\beta$
such that $v^k\pi\beta$ is a prefix of $f(v)$ and $\pi\alpha$ is a prefix of $v$.
Since $f^3(a)$ is a prefix of a Lyndon word,
$\alpha = a$ and $\beta = b$.
Note that $v^{k+1} \prec v^k\pi\beta$.

We have already mentioned that $v$ is not a factor of $f(a^i)$.
From $f(a^ib) = v^\ell$ and $\ell \geq 2$, 
we deduce that $v$ is a suffix of $f(b)$.
Moreover since $v$ is a Lyndon word beginning with $a^ib$, the last letter of $v$ is $b$:
$v$ is so a suffix of $f(v)$.
Since $f(v)f(v)$ is a factor of $f^3(a)$,
the word $v^{k+1}$ is a factor of $f^3(a)$,
This contradicts the fact that $f^3(a)$ is a prefix of a Lyndon word.

Thus $\ell = 1$: $f(a^ib)$ is a Lyndon word.
\qed\end{proof}

\begin{proof2}{Proposition~{\ref{propCas2}}}
We first prove that the two conditions are sufficient.
First observe that, for any integer $n \geq 1$, $a^ib$ is a prefix of $f^n(a^ib)$ 
(this is a direct consequence of the facts that $f$ is prolongable on $a$
and that $f^\omega(a)$ begins with $a^ib$).
Thus by induction, using Lemma~\ref{xx2},  we get:
for any integer $n \geq 0$, $f^n(a^ib)$ is a Lyndon word. 
As $\lim_{n \to \infty} |f^n(a^ib)| = \infty$, 
the word $f^\omega(a)$ has infinitely many prefixes that are Lyndon words.
By definition, it is an infinite Lyndon word.

From now on 
assume that $f^\omega(a)$ is an infinite Lyndon word.
Proposition~\ref{engendreEntrainePresMotsFinis} shows that $f$
preserves the lexicographic order on finite words.
Observe that since it is an infinite Lyndon word, $f^\omega(a)$ is not periodic 
and $f^3(a)$ is a prefix of a Lyndon word. 
Lemma~\ref{suff_aa} states that $f(a^i b)$ is a Lyndon word.
\end{proof2}

\section{\label{sectionCase3}Generating infinite Lyndon words starting with ab}

We consider here the case of generated words beginning with $ab$.
The word $ab^\omega$ is an infinite Lyndon word. A morphism $f$ generates it if and only if
$f(a) = ab^i$ for some integer $i \geq 1$ and if $f(b) \in b^+$.

In what follows we only consider the case where $f^\omega(a)$ begins with $ab^ia$
for some  $i \geq 1$.

\begin{proposition}
\label{propCas3}
Let $f$ be an endomorphism over $\{a\prec b\}$ prolongable en $a$ 
such that $f^\omega(a)$ begins with $ab^ia$ for some integer $i \geq 1$.

The word $f^\omega(a)$ is an infinite Lyndon word if and only if
\begin{enumerate}
\item $f$ preserves the lexicographic order on finite words,
\item $f(ab^i)$ is a power of a Lyndon word $u \neq ab^i$, and,
\item if $i = 1$, $|u| > |f(b^i)|$.
\end{enumerate}
\end{proposition}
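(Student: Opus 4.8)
The plan is to mirror the two-part scheme of Proposition~\ref{propCas2}, treating $ab^i$ as the distinguished block (the role played by $a^ib$ in Section~\ref{sectionCase2}), and to isolate the point where $i=1$ must be handled separately. Write $\bw=f^\omega(a)$. The essential structural novelty is that $f(ab^i)$ need not be a Lyndon word but only a power $u^k$ of a Lyndon word $u$; consequently the infinite family of Lyndon prefixes of $\bw$ that I will exhibit consists of the iterates $f^n(u)$ of the Lyndon \emph{root} $u$, rather than of the iterates of the block itself.

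For necessity, assume $\bw$ is an infinite Lyndon word. Condition~1 is immediate from Proposition~\ref{engendreEntrainePresMotsFinis}. For Condition~2, observe that $f(ab^i)$ is a prefix of $f(\bw)=\bw$, hence a prefix of a finite Lyndon word, so Proposition~\ref{prefix_Lyndon_words} lets me write $f(ab^i)=(xy)^k x$ with $xy$ Lyndon, $y\neq\varepsilon$, $k\geq1$. I would then show $x=\varepsilon$ by the argument already used in Proposition~\ref{engendreEntrainePresMotsFinis} and Lemma~\ref{suff_aa}: since $\bw$ begins with $ab^ia$ the factor $ba$ occurs, so $f(ba)$ occurs, and the junction between the trailing $x$ of an $f(b)$-block and the following copy of $xy$ produces, if $x\neq\varepsilon$, a factor which by Fact~\ref{fait_Preliminaire} is strictly smaller than the prefix $xy$ of $\bw$---impossible for a prefix of a Lyndon word. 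Setting $u=xy$ gives $f(ab^i)=u^k$; that $u\neq ab^i$ follows from non-periodicity, for if $u=ab^i$ then $f^n(ab^i)=(ab^i)^{k^n}$ for all $n$ and $\bw=(ab^i)^\omega$ would be periodic. For Condition~3 (the case $i=1$) I would compare $\bw$ with its proper suffix $f(b)f(a)\cdots$ beginning at position $|f(a)|$: from $f(ab)=u^k$ the word $f(b)$ is a suffix of $u^k$, and Lyndon-ness forces this suffix to exceed $\bw$; the clean way to guarantee that is that $f(b)$ be a \emph{proper} suffix of the single Lyndon word $u$ (so that $f(b)\succ u$ and $f(b)$ is not a prefix of $u$), which is exactly $|u|>|f(b)|=|f(b^i)|$.

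For sufficiency, assume Conditions~1--3. I would first pin down $u$: since $u^k=f(ab^i)$ is a prefix of $\bw$ of length at least $i+2$ it begins with $ab^ia$, and $u\neq ab^i$ rules out $|u|\leq i+1$ (by the position of the second $a$), so $u$ begins with $ab^ia$ and every $f^n(u)$ begins with $u$, hence with $ab^i$. The heart of the argument is then the analogue of Lemma~\ref{xx2}: if $\ell$ is a Lyndon word beginning with $u$, then $f(\ell)$ is a Lyndon word. Here the key simplification is that the proper suffixes of $f(\ell)$ fall into two kinds. A suffix beginning at a letter-block boundary equals $f(t)$ for a proper suffix $t$ of $\ell$; since $\ell$ is Lyndon we have $\ell\prec t$, and order-preservation gives $f(\ell)\prec f(t)$, so all such suffixes are automatically larger. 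The remaining suffixes begin strictly inside a block $f(a)$ or $f(b)$, that is with a proper non-empty suffix $s'$ of $f(a)$ or of $f(b)$; since $f(a)$ and $f(b)^i$ are a prefix and a suffix of $u^k$, such an $s'$ is, after stripping a leading power of $u$, a proper suffix of the Lyndon word $u$ and therefore exceeds the prefix $u$ of $f(\ell)$. Applying this lemma inductively along $u,f(u),f^2(u),\dots$ produces infinitely many Lyndon prefixes of $\bw$, so $\bw$ is an infinite Lyndon word.

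The main obstacle is precisely the within-block suffix analysis just sketched, and its degeneration when $i=1$. The dangerous within-block suffixes are those that begin with a \emph{full} power $u^j$ (which can occur when $f(a)$ or $f(b)$ contains $u$ as a block-aligned suffix): such a suffix starts with $u$, matches the prefix $u$ of $f(\ell)$, and forces the comparison to be pushed further. For $i\geq2$ the long prefix $ab^i$ of $u$ leaves enough room to separate these suffixes from the prefix of $\bw$ directly, whereas when $i=1$ this room disappears and one is left with a single truly dangerous suffix, which is exactly the one controlled by Condition~3. Getting the bookkeeping of the partial-block suffixes of $u^k$ right---distinguishing genuine proper suffixes of $u$ from full powers $u^j$ and showing the latter never yield a smaller suffix---together with the appeal to non-periodicity (via $u\neq ab^i$) to close the remaining cases, is where essentially all the care is needed; everything else reduces to order-preservation and to the no-border property of Lyndon words through Fact~\ref{fait_Preliminaire}.
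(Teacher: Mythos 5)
Your overall architecture matches the paper's (necessity via Proposition~\ref{engendreEntrainePresMotsFinis} plus a structural analysis of $f(ab^i)$; sufficiency via an ``image of Lyndon is Lyndon'' lemma iterated along $u, f(u), f^2(u),\dots$), but there are two genuine gaps at exactly the hard points. First, your necessity argument for condition~3 does not prove necessity. You compare $\bw$ with its suffix at position $|f(a)|$ and say the ``clean way to guarantee'' it exceeds $\bw$ is $|u|>|f(b)|$; but to prove necessity you must derive a violation of the Lyndon property from $|f(b)|\geq|u|$, and your comparison cannot do that. If $|f(b)|\geq|u|$ then $f(b)=su^j$ with $s$ a suffix of $u$; when $s\neq\varepsilon$, $s$ is a proper suffix of the Lyndon word $u$, hence $s\succ u$ with a strict letter difference (no border), so the suffix at position $|f(a)|$ begins with $s\succ u$ and is \emph{larger} than $\bw$ --- no contradiction arises there. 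The paper's own example ($f(a)=aba$, $f(b)=bbababb$, $u=ababb$, $f(b)=bb\,u$) shows the failure is located elsewhere: the word begins $u^4bbu^5$ and the violation is the interior factor $u^4a$ against the prefix $u^4b$. Accordingly, the paper's Lemma~\ref{suff_ab} needs a much subtler argument: writing $u=(ab)^mb\cdots$, locating a non-prefix occurrence of $(ab)^mb$ necessarily preceded by $b$, and comparing the factor $u^{nm}u$ (from $f(b(ab)^m)$) with the prefix $u^{nm}s$ of $f^3(a)$, using $u\prec s$. Nothing in your sketch supplies this, nor the subcase $s=\varepsilon$ (handled via periodicity).

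Second, in your sufficiency lemma the suffix classification is incorrect as stated. A suffix of $f(\ell)$ beginning strictly inside a block $f(a)=u^mp$ ($p$ a proper \emph{prefix} of $u$) at a $u$-aligned position is $u^tp\,f(\text{rest})$: after stripping the leading power of $u$ you are left with a proper prefix of $u$, not a proper suffix, so the inequality ``exceeds the prefix $u$'' does not follow (one must realign $p\,f(b^i)=u^{k-m}$ using that every $a$ in $\ell$ is followed by $b^{\geq i}$, an argument you do not make). Moreover the ``truly dangerous'' aligned suffixes $u^jf(\text{rest})$, which you acknowledge, are left without any actual treatment: the vague claim that for $i\geq2$ the prefix $ab^i$ ``leaves enough room'' is not the real mechanism. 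The paper resolves both difficulties structurally: Lemma~\ref{I0} shows $|f(b)|<|u|$ holds \emph{automatically} for $i\geq2$ (if $|f(b)|\geq|u|$, the suffix $f(b)f(b)$ of a power of $u$ forces $f(b)\in u^*$, contradicting injectivity of an order-preserving morphism), and Lemma~\ref{I1} inducts on the standard Lyndon factorization $\ell m$ of Proposition~\ref{baseLyndon}, so that every dangerous suffix takes the form $s'f(b)$ or $s'f(m)$ with $s'$ a proper suffix of the \emph{inductively Lyndon} word $f(\ell)$, while the base case kills the aligned suffix $u^kf(b)$ precisely because $f(b)$ is a proper suffix of $u$, whence $f(b)\succ u$. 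Your plan would need both of these ingredients (or equivalents) to be completed into a proof.
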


Here follows an example showing that indeed in item 2, $f(ab^i)$ is not necessarily a Lyndon word.

\begin{example}
Let $f$ be defined by $f(a) = abbab$ and $f(b) = b$: $f(ab) = (abb)^2$ is the square of a Lyndon word.
By induction one can verify that all words $f^n(abb)$ are Lyndon words with the relation
$f^{n+1}(abb) = f^n(abb) f^n(abb) b$. This confirms that  $f^\omega(a)$ is an infinite Lyndon word.
\end{example}

We now provide an example showing the necessity of item 3.

\begin{example}
Let $f$ be defined by $f(a) = aba$ and $f(b) = bbababb$: 
$f(ab) = u^2$ with $u = ababb$ is the square of a Lyndon word.
Condition 3 is not verified and indeed $f^\omega(a)$ is not a Lyndon word.
It could be verified that $f^\omega(a)$ begins with $u^4bbu^5$ 
and so contains the factor $u^4a$ which is smaller than the prefix $u^4b$.
\end{example}

The proof of  Proposition~\ref{propCas3} is based on the next lemmas.

\begin{lemma}
\label{I0}
Let $f$ be a morphism that preserves the lexicographic order on finite words over $\{a \prec b\}$. 
Assume that $i \geq 2$ is an integer and that $f(ab^i)$ is a power of a Lyndon word $u$.
Then $|u| > |f(b)|$.
\end{lemma}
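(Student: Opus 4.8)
The plan is to argue by contradiction. Write $f(ab^i) = f(a)f(b)^i = u^\ell$ with $u$ a Lyndon word, hence primitive. The goal is $|u| > |f(b)|$, so I suppose $|u| \le |f(b)|$ and aim to force $f(a)$ and $f(b)$ to be powers of $u$, which will contradict the fact that an order-preserving morphism is injective. The case $\ell = 1$ needs no contradiction: then $|u| = |f(a)| + i|f(b)| > |f(b)|$ since $f(a) \ne \varepsilon$. So from now on I assume $\ell \ge 2$.

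First I would extract a divisibility relation from Fine and Wilf's theorem. Because $i \ge 2$, the word $f(b)f(b)$ is a suffix of $u^\ell$, hence a factor of $u^\omega$; let $u'$ be its length-$|u|$ prefix, which is a conjugate of $u$. Then $f(b)f(b)$ is a common prefix of $(u')^\omega$ and $f(b)^\omega$ of length $2|f(b)| \ge |u'| + |f(b)| - \gcd(|u|,|f(b)|)$, the inequality holding since $|f(b)| \ge |u|$. By Theorem~\ref{Th_Fine_Wilf}, $u'$ and $f(b)$ are powers of a common word $z$; as $u'$ is a conjugate of the primitive word $u$ it is itself primitive, so $u' = z$ and $f(b) = (u')^k$ for some $k \ge 1$. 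In particular $|u|$ divides $|f(b)|$.

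Next I would realign everything on $u$ itself. Since $|u|$ divides $i|f(b)|$, it also divides $|f(a)| = \ell|u| - i|f(b)|$; as $f(a)$ is a prefix of $u^\ell$ whose length is a multiple of $|u|$, this gives $f(a) = u^m$ with $m = |f(a)|/|u| \ge 1$. Cancelling yields $f(b)^i = u^{\ell - m}$, and reading off the prefix of length $|f(b)| = k|u|$ gives $f(b) = u^k$ with $k \ge 1$. Thus both $f(a)$ and $f(b)$ are powers of $u$, so $f(ab) = u^{m+k} = f(ba)$ while $ab \ne ba$, contradicting the injectivity of an order-preserving morphism. Equivalently, by Proposition~\ref{Richomme2003BBMS_prop3.3binaire} one would need $f(ab) \prec f(b)$, but $u^k = f(b)$ is a proper prefix of $u^{m+k} = f(ab)$, so $f(b) \prec f(ab)$, a contradiction. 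Hence $|u| > |f(b)|$.

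The crux is the Fine--Wilf step: one must pick the right factor, namely $f(b)f(b)$, which is available precisely because $i \ge 2$, together with the right conjugate $u'$, and then invoke primitivity of $u$ to upgrade the conclusion "$u'$ and $f(b)$ are powers of a common word" into the clean divisibility $|u| \mid |f(b)|$. Once this divisibility is secured, the length counting that forces $f(a), f(b) \in u^*$ and the final contradiction are routine bookkeeping.
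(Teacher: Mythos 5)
Your proof is correct, and while it shares the paper's overall skeleton, the key step is carried out by a genuinely different method. Both arguments assume $|u| \le |f(b)|$, exploit the fact that $i \ge 2$ makes $f(b)f(b)$ a suffix of the power $u^\ell = f(ab^i)$, conclude that $f(a)$ and $f(b)$ are both powers of $u$, and derive a contradiction with the injectivity of an order-preserving morphism (your alternative ending via Proposition~\ref{Richomme2003BBMS_prop3.3binaire}, noting $f(b) \prec f(ab)$ because one is a proper prefix of the other, is equally valid). The difference lies in how each of you proves $f(b) \in u^*$. The paper aligns the two occurrences of $f(b)$ against the $u$-blocks, writes $u = ps$ with $f(b) = su^k$ and observes that $p$ is then a suffix of the first $f(b)$, hence a suffix of a power of $u$; since $|p| < |u|$, the word $p$ is both a proper prefix and a suffix of $u$, which forces $p = \varepsilon$ because Lyndon words are unbordered --- a one-line border argument. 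You instead pass to the conjugate $u'$ of $u$ under which $f(b)f(b)$ is a prefix of $(u')^\omega$, invoke Fine and Wilf (Theorem~\ref{Th_Fine_Wilf}) with the bound $2|f(b)| \ge |u'| + |f(b)| - \gcd(|u|,|f(b)|)$, and use the fact that a conjugate of a primitive word is primitive to extract the divisibility $|u| \mid |f(b)|$, after which the realignment $f(a) = u^m$, $f(b) = u^k$ is routine. Your route uses only the primitivity of $u$, never its Lyndon property, so it is marginally more general; the paper's argument is shorter and stays entirely inside elementary Lyndon combinatorics, avoiding both Fine--Wilf and the conjugacy detour. One small economy you could make: the separate case $\ell = 1$ is unnecessary, since under $|u| \le |f(b)|$ one automatically has $\ell|u| = |f(ab^i)| \ge 2|f(b)| \ge 2|u|$, hence $\ell \ge 2$.
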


\begin{proof}
Assume by contradiction that $|u| \leq |f(b)|$.
Assume first that $|u| < |f(b)|$.
Since $f(b)f(b)$ is a suffix of $f(ab^i)$ so of a power of $u$,
there exist words $p$ and $s$ and an integer $k \geq 1$ 
such that $u = ps$, $f(b)$ ends with $p$ and $f(b) = su^k$.
Since $u$ is a Lyndon word, $p$ cannot be both a prefix and a suffix of $u$ except if $p = \varepsilon$.
When $p = \varepsilon$, $f(b)$ is a power of $u$.
If $|u| = |f(b)|$  then $f(b) = u$.
In all cases both $f(a)$ and $f(b)$ are powers of $u$. 
Hence $f$ is not injective, a contradiction with the fact that 
$f$ preserves the lexicographic order on finite words.
\qed\end{proof}

\begin{lemma}
\label{I1}
Let $f$ be a morphism that preserves the lexicographic order on finite words over $\{a \prec b\}$.  
Assume that $f(ab^i)$ is a power of a Lyndon word $u$ 
for some integer $i \geq 1$. Assume also that $|u| > |f(b)|$ if $i = 1$.
Then, for any non-empty word $v$ over
$\{a \prec b\}$ such that $ab^iv$ is a Lyndon word, 
the word $f(ab^iv)$ is also a Lyndon word.
\end{lemma}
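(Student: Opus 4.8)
The plan is to mirror the inductive argument of Lemma~\ref{xx2}, proceeding by induction on $|v|$ and factoring $ab^iv = \ell m$ with $\ell, m$ Lyndon words, $\ell \prec m$, and $|m|$ as small as possible (Proposition~\ref{baseLyndon}). Throughout I would use that $|u| > |f(b)|$: this holds by hypothesis when $i = 1$ and by Lemma~\ref{I0} when $i \geq 2$. I would first record that $u$ is not a single letter (otherwise $f(a)$ and $f(b)$ would both be powers of that letter, contradicting the injectivity of an order-preserving $f$), so $u$ begins with $a$ and ends with $b$. I would also establish the structural fact that in any Lyndon word beginning with $ab^i$ every maximal run of $b$'s has length at least $i$ and the word ends with $b$ (a suffix beginning with $a$ followed by fewer than $i$ letters $b$, or the suffix $a$, would be smaller than the whole word). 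From this: $\ell$ begins with $ab^i$ (otherwise $\ell$ is a proper prefix $ab^j$, $j<i$, forcing $m$ to begin with $b$ while $|m|\ge 2$, which is impossible for a Lyndon word); and any Lyndon suffix $m$ either begins with $ab^i$ or equals $b$.

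In the main case, $ab^i$ is a prefix of $m$. Then $m \ne ab^i$ (otherwise $m$ would be a prefix of $\ell$, contradicting $\ell \prec m$), so $m = ab^iv_2$ with $v_2 \ne \varepsilon$ and the induction hypothesis gives that $f(m)$ is a Lyndon word. If $\ell$ is strictly longer than $ab^i$, the induction hypothesis also gives $f(\ell)$ Lyndon, order preservation gives $f(\ell)\prec f(m)$, and Proposition~\ref{baseLyndon} finishes. The genuinely new point is the case $\ell = ab^i$, where $f(\ell)=u^k$ need not be a Lyndon word. I would handle it with the auxiliary observation that if $w$ is a Lyndon word with $u \prec w$ then $u^kw$ is a Lyndon word (iterate Proposition~\ref{baseLyndon}, using $u\prec u^jw$ for $j\ge 1$ and $u\prec w$). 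Here $ab^i \prec m$ and order preservation give $u^k = f(ab^i)\prec f(m)$, whence $u \preceq u^k \prec f(m)$, so $u \prec f(m)$ and $f(ab^iv)=u^kf(m)$ is Lyndon.

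The hard part will be the remaining case $m = b$, where $f(ab^iv)=f(\ell)f(b)$ with $f(b)$ not a Lyndon word, so Proposition~\ref{baseLyndon} is unavailable and I must verify directly that every proper non-empty suffix $s$ of $f(ab^iv)$ satisfies $f(ab^iv)\prec s$. I would split on the length of $s$. If $|s|\le |f(b)|$, then, because $f(b)$ is a suffix of $u^k=f(ab^i)$ and $|f(b)|<|u|$, the suffix $s$ is a proper suffix of the single Lyndon word $u$; hence $u \prec s$, and since $f(ab^iv)$ begins with $u$ this gives $f(ab^iv)\prec s$. This is exactly where the hypothesis $|u|>|f(b)|$ is indispensable. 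If $|s|>|f(b)|$, I write $s = s'f(b)$ with $s'$ a proper non-empty suffix of $f(\ell)$; when $\ell$ is longer than $ab^i$, $f(\ell)$ is Lyndon by induction, so $f(\ell)\prec s'$ with $f(\ell)$ not a prefix of $s'$ (no Lyndon word has a non-empty border), and appending $f(b)$ preserves the strict inequality.

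The only delicate subcase is the degenerate $v = b$, i.e. $\ell = ab^i$ and $f(ab^{i+1}) = u^kf(b)$, where $f(\ell)=u^k$ is not Lyndon; there I would analyse the suffixes $s'$ of $u^k$ according to whether they are of the form $u^j$ or $u''u^j$ with $u''$ a proper suffix of $u$, again using that proper suffixes of the Lyndon word $u$ exceed $u$ and that $|f(b)|<|u|$. I expect this subcase together with the case $m=b$ to be the main obstacle; the remaining bookkeeping is a routine adaptation of Lemma~\ref{xx2}.
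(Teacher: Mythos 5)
Your proposal is correct and takes essentially the same route as the paper's proof: the same induction on $|v|$ with the factorization $ab^iv = \ell m$ into Lyndon words (the paper does not even need $|m|$ minimal here), the same iterated application of Proposition~\ref{baseLyndon} to handle $\ell = ab^i$ via $u \preceq u^n = f(ab^i) \prec f(m)$, and the same suffix-by-suffix verification using $|f(b)| < |u|$ (from the hypothesis or Lemma~\ref{I0}) when $m = b$. The ``delicate subcase'' $v = b$ that you leave as a sketch is exactly the paper's induction base, and the ingredients you name are precisely how the paper settles it: every proper suffix of $u^n f(b)$ is either a proper suffix of $u$, or $s''u^jf(b)$ with $s''$ a proper non-empty suffix of $u$ (so $u \prec s''$ with $u$ not a prefix), or $u^jf(b)$ with $j < n$, where $f(b)$ being a proper non-empty suffix of $u$ gives $u \prec f(b)$ and hence $u^{j+1} \prec u^jf(b)$.
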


\begin{proof}
We act by induction on $|v|$.  
Let us observe that $f$ is non-erasing and injective
since it preserves the lexicographic order on finite words.
Let $n$ be the integer such that $f(ab^i) = u^n$.
Observe that $|f(b)| < |u|$ (by hypothesis if $i = 1$ and by Lemma~\ref{I0} if $i \geq 2$).

We first assume that $|v| = 1$. 
In this case, since 
$ab^iv$ is a Lyndon word, $v = b$. 
Any suffix $s$ of $f(ab^{i+1})$ with $|s| \leq |f(b)|$
is also a suffix of the Lyndon word $u$.
Thus $u \prec s$ (and for length reason, $u$ is not a prefix of $s$). 
Hence $f(ab^{i+1}) \prec s$. 
Consider now a suffix $s$ of $f(ab^{i+1})$ 
such that $|f(b)| < |s| < |f(ab^{i+1})|$. 
We have $s = s'f(b)$ for some suffix $s'$ of $f(ab^i) = u^n$.
If $s' = s''u^k$ for some proper non-empty suffix $s''$ of $u$ and some integer $k$
then $u \prec s''$ and $u$ is not a prefix of $s''$. Once again $f(ab^{i+1}) \prec s$.
If $s' = u^k$ for some integer $k$ such that 
$1 \leq k < n$, $s = u^k f(b)$.
As $f(b)$ is a proper non-empty suffix of $u$,
$u \prec f(b)$. 
Hence $u^{k+1} \prec u^kf(b)$.
Moreover since $k+1 \leq n$, $f(ab^i) \prec u^k f(b)$.
So for any proper non-empty suffix $s$ of $f(ab^{i+1})$,
$f(ab^{i+1}) \prec s$: $f(ab^{i+1})$ is a Lyndon word.

From now on assume that $|v| \geq 2$.
By Proposition~\ref{baseLyndon}, there
exist two Lyndon words $\ell$ and $m$ 
such that $ab^iv = \ell m$ and $\ell \prec m$. Two cases can hold.
\begin{description}
\item{Case $|m| \geq 2$.} 
As $m$ cannot begin with the letter $b$ (as any Lyndon word of length at least 2 over a binary alphabet),
$\ell$ must begin with $ab^i$.  
Moreover as $ab^iv = \ell m$ is a Lyndon word, $m$ is on the form $ab^k$ with $k > i$ or 
begins with a factor $ab^ka$ with $k \geq i$. 
In both cases, $ab^i$ is a proper prefix of $m$,
and by inductive hypothesis $f(m)$ is a Lyndon word.
If $\ell \neq ab^i$, $f(\ell)$ is also a Lyndon word.
Moreover, since $f$ preserves the lexicographic order, 
$f(\ell) \prec f(m)$.  
By Proposition~\ref{baseLyndon}, 
$f(ab^iv) = f(\ell m)$ is a Lyndon word.
If $\ell = ab^i$, $f(\ell)=u^n$.
Since $f$ preserves the lexicographical order, $u \preceq f(\ell) \prec f(m)$.
Using Proposition~\ref{baseLyndon}, one can prove by induction
that $u^kf(m)$ is a Lyndon word for any $k \geq m$.
Once again, $f(ab^iv) = u ^n f(m)$ is a Lyndon word.

\item{Case $|m| = 1$.} 
In this case, $m = b$. 
Let $s$ be a proper non-empty suffix of $f(ab^iv)$. 
If $|s|  \leq |f(b)|$, 
then $s$ is a suffix of the Lyndon word $u$ (remember that $|f(b)| < |u|$
and $f(ab^i) = u^n$).
This implies that $u \prec s$ and so that $f(ab^iv) \prec s$.
If $|f(b)| < |s| < |f(ab^iv)|$, we have
$s = s'f(b)$ for some proper non-empty suffix $s'$ of the Lyndon word $f(\ell)$ 
(since $|v| \geq 2$, $|\ell| > |ab^i|$ 
and the inductive hypothesis can be applied).
Thus $f(\ell) \prec s'$ which implies that $f(ab^iv) \prec s$.
Hence $f(ab^iv)$ is a Lyndon word.
\end{description}
\qed\end{proof}

\begin{lemma}
\label{suff_ab}
Assume that $f$ is an endomorphism over $\{a \prec b\}$ prolongable on $a$ 
such that $f^3(a)$ is a prefix of a Lyndon word, 
$f^\omega(a)$ begins with the word $ab^ia$ for some integer $i \geq 1$
and $f^\omega(a)$ is not periodic.
Then $f(ab^i)$ is a power of a Lyndon word $u \neq ab^i$.
Moreover if $i = 1$, $|u| > |f(b)|$.
\end{lemma}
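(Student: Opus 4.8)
The plan is to follow the architecture of the proof of Lemma~\ref{suff_aa}, with $ab^i$ now playing the role of $a^ib$. First I would locate the beginning of the fixed point. Since $f$ is prolongable, $|f(a)|\ge 2$, and as $f(a)$ is a prefix of $f^\omega(a)=ab^ia\cdots$ it begins with $ab$. Next I would show that $f^2(a)$ begins with $ab^ia$: if $|f^2(a)|\le i+1$ then $f^2(a)$ is a prefix of $ab^ia$ of length $\le i+1$, hence lies in $ab^*$, which forces $f(a)\in ab^*$ and $f(b)\in b^*$, so $f^\omega(a)=ab^\omega$ is periodic, a contradiction. Thus $ab^i$ is a prefix of $f^2(a)$, so $f(ab^i)$ is a prefix of $f^3(a)$ and therefore a prefix of a Lyndon word. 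By Proposition~\ref{prefix_Lyndon_words} I may write $f(ab^i)=u^np$ with $u$ a Lyndon word, $p$ a proper (possibly empty) prefix of $u$, and $n\ge 1$; one checks $u$ begins with $ab$. It then remains to prove $p=\varepsilon$ (this yields the power-of-Lyndon claim), $u\ne ab^i$, and, when $i=1$, the bound $|u|>|f(b)|$.

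For the step $p=\varepsilon$, the starting point is that, since $f^2(a)$ begins with $ab^ia$, the word $f^3(a)$ begins with $f(ab^i)f(a)=u^np\,f(a)$, while $u$ is a prefix of $f^3(a)$. Assume $p\ne\varepsilon$. If $f(a)$ begins with $u$ (in particular whenever $|f(a)|\ge|u|$), then the factor $pu$ sitting just after $u^n$ satisfies $pu\prec u$ by Fact~\ref{fait_Preliminaire}, and since $pu$ is not a prefix this contradicts the prefix-of-Lyndon property of $f^3(a)$. Otherwise $f(a)$ is a proper prefix of $u$; letting $c$ denote the letter of $u$ immediately after $p$, and using that $f(a)$ begins with the letter $a$, the factor $pa$ occurring after $u^n$ is $\prec u$ as soon as $c=b$ (because then the length-$(|p|+1)$ prefix of $u$ is $pb\succ pa$), giving the same contradiction. \emph{The main obstacle is the residual case $c=a$} (which forces $|p|\ge 2$, as $u$ begins with $ab$): here $pa$ coincides with a prefix of $u$ and the naive factor does not beat the prefix. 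I expect to settle it by continuing the comparison, exploiting that $u$ is a Lyndon word beginning with $ab$ (so its proper suffixes exceed $u$ and $u$ has no nontrivial border) together with $f(a)=ab\cdots$, to exhibit a genuine factor of $f^3(a)$ strictly smaller than a prefix and again invoke Fact~\ref{fait_Preliminaire}.

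Once $p=\varepsilon$, so that $f(ab^i)=u^n$, ruling out $u=ab^i$ is short: if $u=ab^i$ then $f(ab^i)=(ab^i)^n$, and since $ab^i$ is a prefix of the fixed point, iterating $f$ shows every $f^k(ab^i)$ is a power of $ab^i$ and a prefix of $f^\omega(a)$, forcing $f^\omega(a)=(ab^i)^\omega$; for $n=1$ one instead sees directly that $f(b)\in b^*$, again yielding periodicity. Either way this contradicts non-periodicity, so $u\ne ab^i$.

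Finally, for $i=1$ I must show $|u|>|f(b)|$. If $n=1$ this is immediate, since $|u|=|f(ab)|=|f(a)|+|f(b)|>|f(b)|$. If $n\ge 2$, I would argue by contradiction: assuming $|u|\le|f(b)|$ and writing $f(ab)=u^n=f(a)f(b)$, the suffix $f(b)$ spans at least one full copy of $u$, so $f(b)=r'u^m$ with $m\ge 1$ and $r'$ a proper suffix of $u$ (the case $r'=\varepsilon$ would make $f(a)$ and $f(b)$ both powers of $u$, hence $f^\omega(a)=u^\omega$ periodic). This surplus copy of $u$ carried inside $f(b)$ propagates under $f$ and produces, in $f^3(a)$, an internal block $u^{m'}$ longer than the run of $u$'s at the very start, followed by the letter $a$; the factor $u^{m'}a$ is then strictly smaller than the prefix $u^{m'}b$, contradicting the prefix-of-Lyndon hypothesis. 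This is the $i=1$ analogue of Lemma~\ref{I0}, obtained through the condition on $f^3(a)$ rather than through order preservation. The careful bookkeeping of these $u$-runs, together with the case $c=a$ above, is where the genuine work of the proof lies.
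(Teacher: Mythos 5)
Your scaffolding is sound and partly matches the paper (locating $ab^ia$ as a prefix of $f^2(a)$, so that $f(ab^i)$ is a prefix of a Lyndon word; the easy exclusion of $u=ab^i$; the trivial case $n=1$ of the length bound), but the two places you yourself flag as ``the genuine work'' are genuine gaps, and at the first one your chosen route is structurally weaker than what is needed. For $p=\varepsilon$ you only know that $f^3(a)$ begins with $u^np\,f(a)$, and in your residual case $c=a$ (where $f(a)$ is a short proper prefix of $u$ whose continuation inside $u$ is the letter $a$) the factor starting after $u^n$ simply agrees with a prefix of $u$; ``continuing the comparison'' has no visible termination, because the divergence point depends on how $f$ of the rest of $f^2(a)$ unfolds. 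The missing idea --- the one the paper uses --- is to secure a full second copy of $f(ab^i)$ right after the first: the paper shows that $f^3(a)$ has a prefix of the form $ab^iab^ka$, deduces $k\geq i$ from the prefix-of-Lyndon property, hence $(ab^i)^2$ is a prefix, so $f(ab^i)^2$ is a prefix of a Lyndon word, and then Lemma~\ref{U1} (whose proof is exactly your $pu\prec u$ argument, made unconditional by the square) yields $f(ab^i)=u^n$ in one stroke, with no case distinction on $f(a)$ versus $u$. You never establish the square $(ab^i)^2$, which is why your argument stalls exactly where the paper's does not.

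The second gap is the case $i=1$, $n\geq 2$ of the bound $|u|>|f(b)|$. Your sketch asserts that the surplus copies of $u$ inside $f(b)=r'u^m$ produce in $f^3(a)$ an internal run $u^{m'}$ that is \emph{longer than the initial run and followed by the letter $a$}, to be beaten by a prefix $u^{m'}b$; both italicized claims are precisely what must be proved, and the paper's actual contradiction is not a single-letter comparison of that kind. The paper argues structurally: since $u$ is a Lyndon word beginning with $aba$ and $u\neq ab$, it begins with $(ab)^mb$ for some $m\geq 2$; the non-prefix occurrence of $(ab)^mb$ in $u^n=f(ab)$ is preceded by $b$ (as $aa$ cannot occur), so $f((ab)^mb)$ exhibits the \emph{prefix} $u^{nm}s$ of $f^3(a)$ (writing $f(b)=su^j$ with $s$ a proper non-empty suffix of $u$), while $f(b(ab)^m)$ contributes the \emph{factor} $u^{nm+1}$; then $u\prec s$ gives $u^{nm}u\prec u^{nm}s$, contradicting that $f^3(a)$ is a prefix of a Lyndon word. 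Note the comparison is against the suffix $s$ of $u$, not against a letter $b$, and locating the two blocks requires the $(ab)^mb$ structure you do not derive. A minor slip besides: when you rule out $f(a)\in ab^*$, $f(b)\in b^*$, the contradiction is with the hypothesis that $f^\omega(a)$ begins with $ab^ia$, not with non-periodicity, since $ab^\omega$ is not periodic in this paper's sense (a word $u^\omega$).
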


\begin{proof}
The word $ab^ia$ is a prefix of $f^\omega(a)$.
Let us prove that the word $f^3(a)$ has a prefix on the form $ab^iab^ka$.
If $f(a)$ has $ab^ia$ as a prefix, 
then $f^2(a)$ (and so $f^3(a)$) contains at least 4 occurrences of $a$.
Since $f^2(a)$ is a prefix of a Lyndon word, it cannot contain the factor $aa$. Hence we get the result.
Assume now that $f(a) = ab^j$ for some $j < i$.
Since $f$ is prolongable on $a$, $j > 0$.
It follows that $f(b)$ begins with $b^{j-i}a$.
Then $f^3(a)$ contains at least 3 occurrences of $a$. 
And once again $f^3(a)$ has a prefix on the form $ab^iab^ka$.

Since $f^3(a)$ is a prefix of a Lyndon word, we have $k \geq i$ and so $(ab^i)^2$ is a prefix of $f^3(a)$.
Lemma~\ref{U1} shows that $f(ab^i)$ is a power of a Lyndon word $u$: 
$f(ab^i) = u^n$ for an integer $n \geq 1$.
If $u = ab^i$, we have $f^\omega(a) = (ab^i)^\omega$ 
which contradicts the fact that $f^\omega(a)$ is aperiodic.
Thus $u \neq ab^i$.

Assume now that $i = 1$ and $|f(b)| \geq |u|$.
From $f(ab) = u^n$ and $f(a) \neq \varepsilon$, we get $n \geq 2$.
Let $s$ be the proper suffix of $u$ 
and let $j \geq 1$ be the integer such that $f(b) = su^j$.
If $s = \varepsilon$, then both $f(a)$ and $(b)$ are powers of $u$. 
This implies that $f^\omega(a) = u^\omega$, a contradiction.
Assume now that $s \neq \varepsilon$.
Let $p$ be the word such that $u = ps$: 
$f(a) = u^kp$ for some integer $k \geq 0$ and $p \not\in \{\varepsilon, u\}$ since $s \not\in \{\varepsilon, u\}$.
Since $u$ is a Lyndon word different from $ab$ but beginning with $aba$, 
we deduce that $u$ begins with $(ab)^mb$ for some $m\geq 2$.

Since $n \geq 2$,
the word $(ab)^mb$ 
has at least one non prefix occurrence in $f(ab)$ so in $f^3(a)$.
This occurrence must be preceded by the letter $b$
since $aa$ cannot occur in $f^3(a)$ which is a prefix of a Lyndon word.
Hence the word $f(ab)^ms =u^{nm}s$ is a prefix of
$f((ab)^mb)$itself a prefix of $f^3(a)$,
and,
the word $uu^{nm}$ which is a suffix of $f(b(ab)^m)$
is a factor of $f^3(a)$.
Since $u \prec s$, we have $u^{nm}u \prec u^{nm}s$:
this contradicts the fact that $f^3(a)$ is a prefix of a Lyndon word.
\qed\end{proof}

\begin{proof2}{Proposition~{\ref{propCas3}}}
Let us fist show that the three conditions imply that $f^\omega(a)$ is an infinite Lyndon word.
Since $f$ preserves the order on finite words, $f$ is not erasing.
Let $u$  be the word occurring in condition 2
and let $k$ be the integer such that $f(ab^i) = u^k$.
Observe that $ab^i$ and the prefix $u$ of $f(ab^i)$ are both prefixes of $f(a)^\omega$.
From $|f(ab^i)| \geq |ab^i|$, $ab^i$ is a prefix of $f(ab^i) =u^k$.
Hence $ab^i $ is a prefix of $u$.
By hypothesis, we cannot have $ab^i = u$. 
So $ab^i$ is a proper prefix of $u$.
For any $n \geq 0$, $f^n(u)$ is a prefix of $f^\omega(a)$ and so 
$ab^i$ is a proper prefix of $f^n(u)$.
Due to condition 3, one can apply Lemma~\ref{I1}.
Thus it follows by induction that $f^n(u)$ is a  Lyndon word for all $n \geq 0$: 
$f^\omega(a)$ is an infinite Lyndon word.

Let us show that the conditions are necessary.
First Proposition~\ref{engendreEntrainePresMotsFinis} 
shows that $f$ preserves the lexicographic order on finite words.
Observe that since it is an infinite Lyndon word, $f^\omega(a)$ is not periodic 
and $f^3(a)$ is a prefix of a Lyndon word. 
Lemma~\ref{suff_ab} states that $f(ab^i)$ is a power of a Lyndon word $u$
and, when $i =1$, $|u| > |f(b)|$.
\end{proof2}

\section{\label{Section_main_result}A general characterization}

Let us prove our main characterization.

\begin{theorem}
\label{th_main}
Let $f$ be an endomorphism over $\{a\prec b\}$ prolongable en $a$.
The word
$f^\omega(a)$ is an infinite Lyndon word 
if and only if
\begin{enumerate}
\item $f$ preserves the lexicographic order on finite words, 
\item $f^\omega(a)$ is not periodic and 
\item the word $f^{3}(a)$ is a prefix of a Lyndon word.
\end{enumerate}
\end{theorem}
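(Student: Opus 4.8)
The plan is to reduce the theorem to the case analyses already carried out in Propositions~\ref{propCas2} and~\ref{propCas3}, handling the two implications separately; almost all of the combinatorial content is encapsulated in those propositions and in Lemmas~\ref{suff_aa} and~\ref{suff_ab}, so the theorem is essentially an assembly argument together with a careful case split.

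For the necessity (the forward implication) I would assume $f^\omega(a)$ is an infinite Lyndon word and read off the three conditions directly. Condition~2 is immediate, since it was recalled after the definition that an infinite Lyndon word is never periodic. Condition~1 is exactly Proposition~\ref{engendreEntrainePresMotsFinis}. For condition~3 I would use that an infinite Lyndon word has infinitely many Lyndon prefixes: choosing any Lyndon prefix of length at least $|f^3(a)|$ exhibits $f^3(a)$ as a prefix of a Lyndon word.

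For the sufficiency I would assume conditions~1--3 and split according to the shape of $f^\omega(a)$, which begins with $a$ because $f$ is prolongable on $a$. Condition~2 rules out $f^\omega(a) = a^\omega$, so if the word begins with $aa$ it in fact begins with $a^ib$ for some $i \geq 2$; here Lemma~\ref{suff_aa}, fed by conditions~2 and~3, shows that $f(a^ib)$ is a Lyndon word, and Proposition~\ref{propCas2} then yields that $f^\omega(a)$ is an infinite Lyndon word. If instead the word begins with $ab$, there are two subcases. If $f^\omega(a) = ab^\omega$ then it is an infinite Lyndon word, as already noted at the start of Section~\ref{sectionCase3}. Otherwise a second occurrence of $a$ appears and the word begins with $ab^ia$ for some $i \geq 1$; then Lemma~\ref{suff_ab}, again using conditions~2 and~3, provides that $f(ab^i)$ is a power of a Lyndon word $u \neq ab^i$ with $|u| > |f(b)|$ when $i = 1$, and Proposition~\ref{propCas3} concludes.

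The only real subtlety I anticipate is the case distinction itself rather than any new estimate: one must remember that $ab^\omega$, although eventually constant, is \emph{not} periodic in the sense used here (it is not of the form $v^\omega$), so condition~2 does not discard it, and neither Proposition~\ref{propCas2} nor Proposition~\ref{propCas3} covers it, forcing it to be recognized on its own as an infinite Lyndon word. Once this subcase is isolated, the remaining cases match the hypotheses of the earlier results exactly, so the argument closes without further work.
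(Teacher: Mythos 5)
Your proposal is correct and follows essentially the same route as the paper: necessity via Proposition~\ref{engendreEntrainePresMotsFinis} plus the definitional facts, and sufficiency via the same three-way case split (prefix $a^ib$ with $i \geq 2$ handled by Lemma~\ref{suff_aa} and Proposition~\ref{propCas2}; the word $ab^\omega$ treated separately; prefix $ab^ia$ handled by Lemma~\ref{suff_ab} and Proposition~\ref{propCas3}). Your observation that $ab^\omega$ is not periodic in the paper's sense and must be recognized as a Lyndon word on its own is exactly the point the paper's proof also isolates.
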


\begin{proof}
Assume first that $f^\omega(a)$ is an infinite Lyndon word. 
Conditions 2 and 3 are direct consequences of this hypothesis.
Proposition~\ref{engendreEntrainePresMotsFinis} states condition~1.

Assume now that the three conditions hold.
If $f^\omega(a)$ begins with $aa$, then it begins with $a^ib$ for some integer $i \geq 2$.
Lemma~\ref{suff_aa} states that $f(a^ib)$ is a Lyndon word. 
Thus from Proposition~\ref{propCas2} $f^\omega(a)$ is an infinite Lyndon word.

If $f^\omega(a) = ab^\omega$, it  is an infinite Lyndon word.

If $f^\omega(a)$ begins with $ab^ia$ for some integer $i \geq 1$,
Lemma~\ref{suff_ab} states that $f(ab^i)$ is a power of a Lyndon word $u \neq ab^i$.
Moreover if $i = 1$ then $|u| > |f(b)|$.
Thus from Proposition~\ref{propCas3} $f^\omega(a)$ is an infinite Lyndon word.
\qed\end{proof}

\begin{example}
Let $f$ be the morphism defined by $f(a) = aba$ and $f(b) = bab$.
This morphism fulfills conditions 1 and 3 of Theorem~\ref{th_main}.
It generates the periodic word $(ab)^\omega$. 
This shows the importance of the condition \textit{$f^\omega(a)$ is not periodic} 
that does not occur in Propositions~\ref{sectionCase2} and \ref{sectionCase3},
\end{example}

\begin{example}
\label{example_optimality_3}
Let $\varphi$ be the Fibonacci morphism defined by $\varphi(a) = ab$ and $\varphi(b) = a$.
We have $\varphi^2(a) = aba$ and $\varphi^3(a) = abaab$.
This examples shows the optimality of the exponent 3 in the last condition of Theorem~\ref{th_main}.
\end{example}

\section{Conclusion}

After Theorem~\ref{th_main}, a natural problem is to obtain a characterization of morphisms 
that generate infinite Lyndon words over an alphabet containing at least three letters.

Let us observe that Proposition~\ref{engendreEntrainePresMotsFinis} does not extend to morphisms 
over alphabets with at least three letters. Indeed consider any 
endomorphism $f$ such that $f(a) = a u$ with $u$, $f(b)$ and $f(c)$ belonging to $\{b, c\}^*$ (note that one of the two words $f(b)$ and $f(c)$ could be the empty word: we just need that $\lim_{n \to \infty} |f^n(a)|$ is infinite). 
Then $f^\omega(a)$ is an infinite Lyndon word whatever is $f$ (that may not preserve the lexicographic order). Note that the previous example can include some erasing morphisms. We don't know whether the condition \textit{$f$ preserves the lexicographic order} is necessary if $f$ generates a recurrent word.

Note also that, if an analog of Theorem~\ref{th_main} exists for a larger alphabet $A$, then the exponent in the last condition would be at least 
$\#A+1$ with $\#A$ the cardinality of $A$. 
Indeed if $A_n = \{a_1 \prec \ldots \prec a_n\}$, one can extends Example~\ref{example_optimality_3} defining the morphism $f$ by
$f(a_1) = a_1a_2$, $f(a_i) = a_{i+1}$ for $2 \leq i < n$ and $f(a_n) = a_1$. Then for $1 \leq i \leq n$, $f^i(a_i)$ is a prefix of $a_1a_2\cdots a_n a_1$ and so a prefix of Lyndon word while $f^{n+1}(a)$ is not such a prefix since it begins with $a_1a_2\cdots a_n a_1a_1$.

\bibliographystyle{plain}
\bibliography{local}

\end{document}